\documentclass[12pt,notcite, notref]{article}

\usepackage{graphics,amsmath,amssymb,amsthm,accents}
\usepackage{epic}

\textwidth=16truecm
\textheight=22truecm
\oddsidemargin=-0.0cm
\evensidemargin=-0.0cm

\newtheorem{lem}{Lemma}

\newtheorem{defn}{Definition}
\topmargin=0pt \overfullrule=0pt
\def \h#1{\widehat{#1}}
\def \t#1{\widetilde{#1}}
\def \b#1{\overline{#1}}
\def \c#1{\accentset{\circ}{#1}}
\def \c#1{\accentset{\circ}{#1}}
\def \th#1{\widehat{\widetilde{#1}}}
\def \hb#1{{\widehat{\overline{#1}}}}
\def \bh#1{{\widehat{\overline{#1}}}}
\def \bb#1{{\overline{\overline{#1}}}}
\def \tb#1{\widetilde{\overline{#1}}}
\def \bt#1{\widetilde{\overline{#1}}}
\def \dh#1{\underaccent{\hat}{#1}}
\def \db#1{\underaccent{\bar}{#1}}
\def \dt#1{\underaccent{\tilde}{#1}}
\def \dth#1{\underaccent{\tilde}{\underaccent{\hat}{#1}}}
\def \dtb#1{\underaccent{\bar}{\underaccent{\tilde}{#1}}}
\def \dhb#1{\underaccent{\bar}{\underaccent{\hat}{#1}}}
\def \dbb#1{\underaccent{\bar}{\underaccent{\bar}{#1}}}
\numberwithin{equation}{section}
\newtheorem{prop}{Proposition}

\title{Soliton Solutions for ABS Lattice Equations: \\
II: Casoratians and Bilinearization}

\author{Jarmo Hietarinta$^1$\footnote{E-mail: jarmo.hietarinta@utu.fi}
  ~ and ~ Da-jun Zhang$^2$\footnote{E-mail: djzhang@staff.shu.edu.cn }
  \\
  {\small\it $^1$Department of Physics and Astronomy,
    University of Turku, FIN-20014 Turku, Finland} \\
  {\small\it $^2$Department of Mathematics, Shanghai University,
    Shanghai 200444, P.R. China}}

\date{\today}

\begin{document}

\maketitle

\begin{abstract}
  In Part I soliton solutions to the ABS list of multi-dimensionally
  consistent difference equations (except Q4) were derived using
  connection between the Q3 equation and the NQC equations, and then by
  reductions. In that work central role was played by a Cauchy matrix.
  In this work we use a different approach, we derive the $N$-soliton
  solutions following Hirota's direct and constructive method. This
  leads to Casoratians and bilinear difference equations. We give here
  details for the H-series of equations and for Q1; the results for
  Q3 have been given earlier.
\end{abstract}

\section{Introduction}
The analysis of integrability for discrete systems is now in active
development. Discreteness introduces many complications in comparison
with continuous integrability, mainly due to the lack of Leibniz rule
for discrete derivatives. This also includes the fact that space-time
itself has many discretizations.

The basic philosophy and definitions have been given in Part I
\cite{NAJ}, here we repeat only some essential ingredients.  The
underlying space is formed by a Cartesian square lattice and the
dynamical equation is defined on an elementary square of this lattice.
(There are other possible settings but even this case has not been
fully analyzed.) As for the definition of integrability we choose
``Consistency-around-the cube'' (CAC) which is further explained in
Section \ref{S:CAC}. With this choice of integrability many nice
properties follow, including the existence of a Lax pair.
Furthermore, with mild additional assumption one can classify the
integrable models \cite{ABS-CMP-2002} and the ``ABS list'' is
surprisingly short.

In this paper we construct multi-soliton solutions for the H-series of
models in the ABS-list, as well as for the Q1 model.  Our approach is
constructive for each model and is based on the approach of Hirota, as
explained in Section \ref{S:H}. Some work on this problem already
exists for the two highest members of the ABS-list, but the Q4 results
in \cite{Sol-Q4-2007} is far from explicit and the result for Q3
\cite{Sol-Q3-2008} relied on a specific association with the
NQC-model, which has been further elaborated in Part I.

We hope that the detailed analysis of the simpler models, for which
everything can be made systematic and explicit, will provide further
understanding on the soliton question and that it could be used for
other models as well.

In the next section we will give the general background for our approach
and then in subsequent sections we will go through the models H1, H2,
H3 and Q1.

\section{Generalities}
\subsection{Multidimensional consistency and the ABS  list\label{S:CAC}}
As in Part I  we only consider quadrilateral lattice equations
defined by a multi-linear relation on the values at the four corners
of an elementary square of Cartesian lattice, see Figure \ref{F:1}(a).
\begin{figure}[b]
\setlength{\unitlength}{0.0004in}
\hspace{2cm}
\begin{picture}(3482,2813)(0,-10)
\put(-800,1510){\makebox(0,0)[lb]{$(a)$}}
\put(1275,2708){\circle*{150}}
\put(825,2808){\makebox(0,0)[lb]{$\h u$}}
\put(3075,2708){\circle*{150}}
\put(3375,2808){\makebox(0,0)[lb]{$\h{\t u}$}}
\put(1275,908){\circle*{150}}
\put(825,1008){\makebox(0,0)[lb]{$u$}}
\put(3075,908){\circle*{150}}
\put(3300,1008){\makebox(0,0)[lb]{$\t u$}}
\drawline(275,2708)(4075,2708)
\drawline(3075,3633)(3075,0)
\drawline(275,908)(4075,908)
\drawline(1275,3633)(1275,0)
\end{picture}
\hspace{4cm}
\begin{picture}(3482,3700)(0,-500)
\put(-1200,1000){\makebox(0,0)[lb]{$(b)$}}
\put(450,1883){\circle{150}}
\put(-100,1883){\makebox(0,0)[lb]{$\b{\t u}$}}
\put(1275,2708){\circle*{150}}
\put(825,2708){\makebox(0,0)[lb]{$\b u$}}
\put(3075,2708){\circle{150}}
\put(3375,2633){\makebox(0,0)[lb]{$\b{\h u}$}}
\put(2250,83){\circle{150}}
\put(2650,8){\makebox(0,0)[lb]{$\h{\t u}$}}
\put(1275,908){\circle{150}}
\put(1275,908){\circle*{90}}
\put(825,908){\makebox(0,0)[lb]{$u$}}
\put(2250,1883){\circle{150}}
\put(2250,1883){\circle{220}}
\put(2250,1883){\circle{80}}
\put(1850,2108){\makebox(0,0)[lb]{$\b{\h{\t u}}$}}
\put(450,83){\circle*{150}}
\put(0,8){\makebox(0,0)[lb]{$\t u$}}
\put(3075,908){\circle*{150}}
\put(3300,833){\makebox(0,0)[lb]{$\h u$}}
\drawline(1275,2708)(3075,2708)
\drawline(1275,2708)(450,1883)
\drawline(450,1883)(450,83)
\drawline(3075,2708)(2250,1883)
\drawline(450,1883)(2250,1883)
\drawline(3075,2633)(3075,908)
\dashline{60.000}(1275,908)(450,83)
\dashline{60.000}(1275,908)(3075,908)
\drawline(2250,1883)(2250,83)
\drawline(450,83)(2250,83)
\drawline(3075,908)(2250,83)
\dashline{60.000}(1275,2633)(1275,908)
\end{picture}\label{F:1}
\caption{(a): The points on which the map is defined, and (b): the
  consistency cube.}
\end{figure}
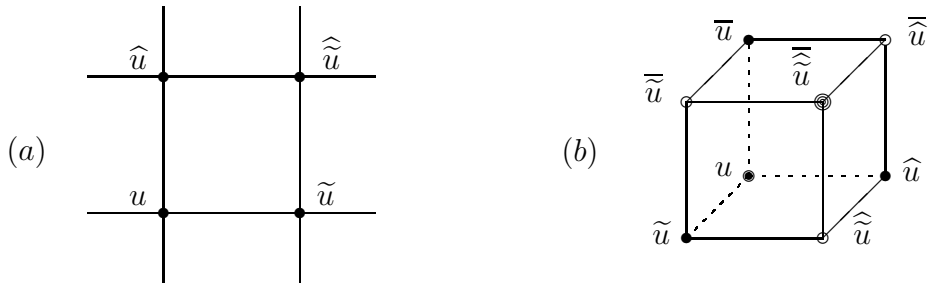
Given a base point $u_{nm}=u$ we indicate the shifts in the $n,m$
directions by a tilde or a hat as described in Figure \ref{F:1}.

Multidimensional consistency is an essential ingredient in the
construction of soliton solutions, the new dimensions standing for
parameters of the solitons. The consistency is defined as follows: We
adjoin a third direction and construct an elementary cube as in Figure
\ref{F:1}(b), the shifts in the new third direction are denoted by a
bar, e.g., $u_{n,m,k+1}=\b u$.  On the base square we have the equation
$Q(u,\t u,\h u,\h{\t u};p,q)=0$, and we now introduce the same
equation with different variables on the sides and on the top, thus we
will have altogether the following set of equations
\begin{subequations}
\label{CACeq}
\begin{eqnarray}
  Q(u,\t u,\h u,\h{\t u};p,q)=0, &&
Q(\b u,\b{\t u},\b{\h u},\b{\h{\t u}};p,q)=0,\label{CACeq-th}\\
  Q(u,\t u,\b u,\b{\t u};p,r)=0, &&
 Q(\h u,\h{\t u},\b{\h u},\b{\h{\t u}};p,r)=0,\label{CACeq-tb}\\
  Q(u,\h u,\b u,\b{\h u};q,r)=0, &&
 Q(\t u,\h{\t u},\b{\t u},\b{\h{\t u}};q,r)=0.\label{CACeq-bh}
\end{eqnarray}
\end{subequations}
Now considering Figure \ref{F:1}(b) we can have initial values given
at black circles ($u,\,\t u,\,\h u,\,\b u$), and in order to compute
the remaining four values we have six equations. We can use the LHS
equations in \eqref{CACeq-th}, \eqref{CACeq-tb}, \eqref{CACeq-bh} to
compute $\h{\t u},\,\b{\t u},\, \b{\h u}$, respectively, and this
leaves the three RHS equations from which we should get the same value for
$\b{\h{\t u}}$, and this implies two consistency conditions.

In \cite{ABS-CMP-2002} a classification of quadrilateral lattice
equations was performed following the above definition of
integrability, with two additional requirements on the equations:
symmetry and the so called ``tetrahedron property'', which means that
the computed value for $\b{\h{\t u}}$ should only depend on $\t u,\,\h
u,\,\b u$ and not on $u$. The ABS list is the following:

\[(u-\h{\t u}) (\t u-\h u)+q-p =0,\eqno{\text{(H1)}} \]

\[(u-\h{\t u}) (\t u-\h u)+
(q-p ) (u+\t u+\h u+\h{\t u})+q^2-p ^2 =0,\eqno{\text{(H2)}} \]

\[p  (u \t u+\h u \h{\t u})-
q (u \h u+\t u \h{\t u})+\delta (p ^2-q^2) =0,
\eqno{\text{(H3)}} \]

\[p (u+\h u) (\t u+\h{\t u}) - q(u+\t u)
(\h u+\h{\t u})- \delta^2 p  q (p -q) =0,
\eqno{\text{(A1)}} \]

\[(q^2-p^2) (u\t u \h u \h{\t u}+1)+
q (p^2-1) (u \h u+\t u \h{\t u})-
p (q^2-1) (u \t u+\h u \h{\t u}),
\eqno{\text{(A2)}} \]

\[p  (u-\h u) (\t u-\h{\t u})-
q (u-\t u) (\h u-\h{\t u})+\delta^2 p  q (p-q) =0,
\eqno{\text{(Q1)}} \]

\[\begin{array}{l}p  (u-\h u) (\t u-\h{\t u})-
q (u-\t u) (\h u-\h{\t u})\\ \quad +
p  q (p -q) (u+\t u+\h u+\h{\t u})
-p  q (p -q) (p ^2-p  q+q^2) =0,\end{array}
\eqno{\text{(Q2)}}\]

\[\begin{array}{l}(q^2-p ^2) (u \h{\t u}+\t u \h u)
+q (p ^2-1) (u \t u+\h u \h{\t u})
-p  (q^2-1) (u \h u +\t u \h{\t u})\\ \quad
-\delta^2 (p ^2-q^2) (p ^2-1) (q^2-1)/(4 p  q) =0,\end{array}
\eqno{\text{(Q3)}}\]

\[\begin{array}{l}
  (h(p)f(q) - h(q)f(p)) [(u \t u  \h u \h{\t u} + 1) f(p)
  f(q)- (u \h{\t u} + \t u \h u)]\\ \quad
  + (f(p)^2 f(q)^2 -1) [(u \t u + \h u \h{\t u} )f(p) - (u \h u + \t u
  \h{\t u}) f(q)]=0,\end{array}
\eqno{\text{(Q4)}}\]
where $h^2=f^4+\delta f^2+1$. This can be parameterized with Jacobi
elliptic functions: $f(x)=k^\frac14\,{\rm sn}(x,k),\,h(x)={\rm
  sn}'(x,k),\, \delta=-(k+1/k)$.
The simpler form of Q4 given above was actually discovered later in
\cite{JH-JNMP-cac}.

The A-series is auxiliary in the sense that A1 goes to Q1 with $u\to
(-1)^{n+m} u$ and A2 to $Q3_{\delta=0}$ by $u\to u^{(-1)^{n+m}}$.
We will not discuss them here.

\subsection{Hirota's bilinear method\label{S:H}}
Our approach to soliton solutions is based on Hirota's bilinear
method\cite{Hirota-book}. It has been very successful in the continuous
case and it is expected to be equally fruitful in constructing
discrete multi-soliton solutions. Hirota's idea was to make a dependent
variable transform into new variables, for which the soliton solution
would be given by a polynomial of exponentials. In terms of these new
dependent variables the dynamical equations were quadratic and
derivatives appeared only in terms of Hirota's bilinear derivatives.

But what would be the natural discrete generalization? The essential
property of an equation in Hirota's bilinear form seems to be its gauge
invariance (cf., \cite{tril} in the continuous case) and it has a
natural discrete extension, leading us to the following definition:
\begin{defn} We say  an equation is in {\bf Hirota bilinear (HB) form}
  if it can be written as
  \begin{equation}
    \label{eq:HB}
    \sum_j\, c_j\, f_{j}(n+\nu_{j}^+,m+\mu_{j}^+)\,
g_{j}(n+\nu_{j}^-,m+\mu_{j}^-)=0
  \end{equation}
  where the index sums $\nu_{j}^++\nu_{j}^-=\nu^s ,
  \mu_{j}^++\mu_{j}^-=\mu^s$ do not depend on $j$.  (The functions
  $f,g$ may be the same.)
\end{defn}

\begin{prop}
\label{P:gauge} Equations in HB form are gauge invariant, i.e.,
  if functions $f_j,g_j$ solve a set of equations in HB form, then so
  do the {\em gauge transformed} functions
\begin{equation}  \label{eq:gauge}
  f'_j(n,m)=A^nB^m\, f_j(n,m),\quad  g'_j(n,m)=A^nB^m\, g_j(n,m).
\end{equation}
\end{prop}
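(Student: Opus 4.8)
The plan is to substitute the gauge-transformed functions \eqref{eq:gauge} directly into the HB form \eqref{eq:HB} and show that every summand picks up one and the same prefactor, which can then be pulled outside the sum. First I would track the effect of the transformation on a single product. The factor $f'_j(n+\nu_j^+,m+\mu_j^+)$ carries the multiplier $A^{n+\nu_j^+}B^{m+\mu_j^+}$ and, likewise, $g'_j(n+\nu_j^-,m+\mu_j^-)$ carries $A^{n+\nu_j^-}B^{m+\mu_j^-}$. Multiplying these, the prefactor attached to the $j$-th term is
\begin{equation}
A^{(n+\nu_j^+)+(n+\nu_j^-)}\,B^{(m+\mu_j^+)+(m+\mu_j^-)}
= A^{\,2n+\nu_j^++\nu_j^-}\,B^{\,2m+\mu_j^++\mu_j^-}.
\end{equation}

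The crux of the argument is the defining constraint of the HB form: the index sums $\nu_j^++\nu_j^-=\nu^s$ and $\mu_j^++\mu_j^-=\mu^s$ do not depend on $j$. This is exactly what forces the prefactor to collapse to $A^{\,2n+\nu^s}B^{\,2m+\mu^s}$, which is independent of the summation index. I would stress that this independence is the whole purpose of the definition; were the index sums to vary with $j$, the prefactors would differ term by term and could not be extracted. With this observation in hand the conclusion is immediate: factoring the common prefactor out of the sum gives
\begin{equation}
\sum_j c_j\, f'_j(n+\nu_j^+,m+\mu_j^+)\,g'_j(n+\nu_j^-,m+\mu_j^-)
= A^{\,2n+\nu^s}B^{\,2m+\mu^s}\sum_j c_j\, f_j(n+\nu_j^+,m+\mu_j^+)\,g_j(n+\nu_j^-,m+\mu_j^-),
\end{equation}
and since the unprimed functions satisfy the HB equation by hypothesis, the inner sum vanishes, hence so does the left-hand side.

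I do not expect a genuine obstacle here: the proposition is a direct bookkeeping consequence of the index-sum condition, and the only care required is in correctly combining the shifted exponents of $A$ and $B$. If the HB form one is working with is actually a system of several equations rather than the single equation written in \eqref{eq:HB}, I would simply apply the same factorization to each equation separately; the argument is unchanged, because the gauge factor depends only on $n,m$ (together with the per-equation constants $\nu^s,\mu^s$) and never on which particular functions $f_j,g_j$ enter a given term.
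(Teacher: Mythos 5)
Your proposal is correct and follows exactly the same route as the paper: compute the prefactor $A^{2n+\nu_j^++\nu_j^-}B^{2m+\mu_j^++\mu_j^-}$ attached to each product, invoke the $j$-independence of the index sums to identify it as a common factor, and pull it out of the sum. The paper's proof is just a terser version of the same bookkeeping.
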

\begin{proof}
  We find
\[
f'_{j}(n+\nu_{j}^+,m+\mu_{j}^+)\, g'_{j}(n+\nu_{j}^-,m+\mu_{j}^-)
=A^{2n+\nu_{j}^++\nu_{j}^-}B^{2m+\mu_{j}^++\mu_{j}^-}
f_{j}(n+\nu_{j}^+,m+\mu_{j}^+)\, g_{j}(n+\nu_{j}^-,m+\mu_{j}^-)
\]
but since the overall factor is the same in each term of the $j$-sum
in \eqref{eq:HB} it can be taken out.
\end{proof}

For integrable equations in HB form there is a perturbative technique
which leads to multi-soliton solutions, more or less algorithmically.
This is described in the next section.

\subsection{Constructing background solutions}
First we have to construct the background or vacuum or seed solution,
on top of which the soliton solutions are constructed. To do this
we use the fixed-point idea \cite{Sol-Q4-2007} in which the consistent
equations \eqref{CACeq} are used with the assumption that $u=\b u$.
However, since some equations are invariant under $u\to T(u)$ it is
actually sufficient that $\b u=T(u)$.  We only consider global
invariances (that is, the transformation $T$ is independent of $n,m$)
and in order to keep the multi-linearity we assume that the
transformation is linear fractional:
\begin{equation}
  \label{eq:moe}
  u_{nm}\to T(u_{nm}):=\frac{c_1u_{nm}+c_2}{c_3u_{nm}+c_4},
\quad c_1c_4-c_2c_3\neq 0.
\end{equation}
Depending on the equation there will be conditions on the parameters
$c_i$. It is straightforward to find the invariances of the equations,
they are given in Table \ref{T:inv}. Note that the special cases with
$\delta=0$ have a bigger invariance group. For Q4 there are various
special cases depending on which limiting case of the elliptic
parameterization one chooses.
\begin{table}
  \centering
  \begin{tabular}{|c|c|c|}
\hline
    Eq. & $T(u)$ & $T(u)$ when $\delta=0$ \\
\hline    H1 &  $u+c,\, -u+c$ & NA\\
\hline    H2 & $u$ & NA\\
\hline    H3 & $u,\,-u$ & $c u,c/u$\\
\hline    A1 & $u,\,-u$  & $c u,c/u$\\
\hline    A2 & $u,\,-u,\,1/u,\,-1/u$ & NA \\
\hline    Q1 &  $u+c,\, -u+c$ & full M\"obius\\
\hline    Q2 & $u$ & NA\\
\hline    Q3 & $u,\,-u$  & $c u,c/u$ \\
\hline    Q4 & $u,\,-u,\,1/u,\,-1/u$ & various \\
\hline
  \end{tabular}
  \caption{Invariances \eqref{eq:moe} of the equations in the ABS list.}
  \label{T:inv}
\end{table}

 For each invariance  $T$ of a given equation  we then need to solve
\begin{subequations}
  \label{CAC-bg}
\begin{eqnarray}
    Q(u,\t u,T(u),T({\t u});p,r)&=&0,\label{CACbg-t}\\
  Q(u,\h u,T( u),T({\h u});q,r)&=&0,\label{CACbg-h}
\end{eqnarray}
\end{subequations}
in order to obtain the corresponding background solution (here $r$ is
a parameter of the solution). Such a solution then automatically
solves \eqref{CACeq-th} by virtue of CAC.

It may be possible to construct still further solutions that can be
called ``background solutions'', but we will not consider them here.

\subsection{Constructing 1-soliton solutions\label{S:1ss}}
Once the background solution is obtained we construct a one-soliton
solution (1SS) using the CAC cube once more, with $\b u$ now being the
1SS. This amounts, in fact, to a B\"acklund transformation (BT).

Thus the first task is to solve
\begin{subequations}
  \label{CAC-1ss}
\begin{eqnarray}
    Q(u,\t u,\b u,\b{\t u};p,r)&=&0,\label{CAC1ss-t}\\
  Q(u,\h u,\b u,\b{\h u};q,r)&=&0,\label{CAC1ss-h}
\end{eqnarray}
\end{subequations}
where we take
\begin{equation}
\b{u}=\b{u}_{0}+v.
\label{u-bar-gen}
\end{equation}
Here $\b u_0$ is the background solution and the bar-shift stands for
possible modifications to get the $v$-equations into a suitable form.
Then solving for $\t v$ and $\h v$ from \eqref{CAC-1ss} we get (for
quadratic $Q$)
\begin{equation}
\t{v}=\frac{Ev}{v+F},\quad \h{v}=\frac{Gv}{v+H},
\label{v-gen-1}
\end{equation}
where $E,F,G,H$ may depend on $n,m$. The bar-shift modification in
$u_0$ is to be chosen so that the numerators have no constant term.
By introducing $v=g/f$ and $\Phi=(g,f)^T$ we can write \eqref{v-gen-1}
as a matrix equation
\begin{equation}
\Phi(n+1,m)=\mathcal N(n,m)\Phi(n,m),\quad
\Phi(n,m+1)=\mathcal M(n,m)\Phi(n,m),
\end{equation}
where
\begin{equation}\label{NM-matricesEG}
\mathcal N(n,m)=\Lambda\begin{pmatrix}E&0\\1&F\end{pmatrix},\quad
\mathcal M(n,m)=\Lambda'\begin{pmatrix}G&0\\1&H\end{pmatrix},
\end{equation}
where $E,F,G,H$ may depend on $n,m$.  The constants of separation
$\Lambda,\Lambda'$ are to be chosen so that
$\widehat{\widetilde{\Phi}} =\widetilde{\widehat{\Phi}}$. In all cases
studied in this paper it turns out that
\begin{equation}\label{NM-matricesST}
\mathcal N(n,m)=\begin{pmatrix}S\frac{U_{n+1,m}}{U_{n,m}}&0\\
\frac{\sigma}{U_{n,m}}&\Delta\end{pmatrix},\quad
\mathcal M(n,m)=\begin{pmatrix}T\frac{U_{n,m+1}}{U_{n,m}}&0\\
\frac{\tau}{U_{n,m}}&\Omega\end{pmatrix},
\end{equation}
where $S,\,T,\,\sigma,\,\tau,\,\Delta,\,\Omega$ are constants and
$U_{nm}$ some function of $n,m$ (sometimes $U_{nm}=1$).  If we now
introduce $\Psi$ by
\begin{equation}
  \label{psi-def}
\Phi(n,m)=\begin{pmatrix}U_{n,m}&0\\0&1\end{pmatrix}\Psi(n,m),
\end{equation}
then for $\Psi$ we have
\begin{equation}
  \label{psi-eqs}
\Psi(n+1,m)=\begin{pmatrix}S&0\\ \sigma &\Delta\end{pmatrix}\Psi(n,m),\quad
\Psi(n,m+1)=\begin{pmatrix}T&0\\ \tau &\Omega\end{pmatrix}\Psi(n,m).
\end{equation}
These are compatible if
\begin{equation}
  \label{NM-compat}
\frac{\sigma}{S-\Delta}=\frac{\tau}{T-\Omega},
\end{equation}
and in that case it is easy to derive
\begin{equation}
  \label{psi-res}
\Psi(n,m)=\begin{pmatrix} S^nT^m & 0 \\
\frac{\tau}{T - \Omega}(S^n T^m-\Delta^n \Omega^m) &
 \Delta^n\Omega^m \end{pmatrix}\Psi(0,0).
\end{equation}
{}From this we can construct $\Phi(n,m)$ and then $v$: Introduce
\begin{equation}
  \label{rho-def}
\rho_{n,m}=\biggl(\frac{S}{\Delta}\biggr)^n
\biggl(\frac{T}{\Omega}\biggr)^m \rho_{0,0}
\end{equation}
and then $v$ can be written as
\begin{equation}
v_{n,m}=\frac{U_{n,m}\frac{v_{0,0}}{U_{0,0}}\rho_{n,m}/\rho_{0,0} }
{1+\frac{\tau}{T-\Omega}\frac{v_{0,0}}{U_{0,0}}(\rho_{n,m}/\rho_{0,0}-1)}.
\end{equation}
Now redefining the constant $\rho_{0,0}$ we can also write this as
\begin{equation}\label{v-res}
v_{n,m}=U_{nm}\,\tfrac{T-\Omega}{\tau}\,\frac{\rho_{n,m}}
{1+\rho_{n,m}}.
\end{equation}

\subsection{N-soliton solutions and Casoratians}
\label{sec:cas}
Having a 1SS in the form mentioned above allows us to propose a change
of dependent variables such that the original equation is given in
terms of discrete equations in HB form. These equations will then be
shown to have solutions given in Casorati determinant form,
corresponding to the Wronskian form solutions for continuous HB
equations. We will now discuss some generalities about the
Casoratians.

In general the Casorati matrix is constructed as follows: given
functions $\psi_i(n,m,l)$ we define the column vector
\begin{subequations}
\begin{equation}\label{C-entry-vec}
\psi(n,m,l)=(\psi_1(n,m,l),\psi_2(n,m,l),\cdots,\psi_{N}(n,m,l))^T,
\end{equation}
and then the generic $N\times N$ Casorati matrix is composed of such
columns with different shifts $l_i$, with the determinant
\begin{equation}
  \label{eq:C-gen}
  C_{n,m}(\psi;\{l_i\})=
  |\psi(n,m,l_1),\psi(n,m,l_2),\cdots,\psi(n,m,l_N)|.
\end{equation}
Two typical  Casoratians that will be used later are
\begin{eqnarray}
  C_{n,m}^1(\psi)&:=&|\psi(n,m,0),\psi(n,m,1),\cdots,\psi(n,m,N-1)|
\nonumber\\
  &\equiv&|0,1,\cdots,N-1| \equiv  |\h{N-1}|,
\label{Caso-H1-1}\\
  C_{n,m}^2(\psi)&:=&|\psi(n,m,0),\cdots,\psi(n,m,N-2),\psi(n,m,N)|
\nonumber\\
  &\equiv&|0,1,\cdots,N-2,N| \equiv  |\h{N-2},N|,
\label{Caso-H1-2}
\end{eqnarray}
\end{subequations}
where we have also introduced the standard shorthand notation
\cite{Freeman-Nimmo-KP} in which only the shifts are given, and where
furthermore sequential changes in the column index are indicated by a
hat: $|0,\cdots,M,\dots|\equiv|\h{M},\dots|$.
We also denote $|\t{N-1}|=|1,2,\cdots, N-1|$.

In the Casoratians used in this work the entries $\psi_i$ in the
$N$th-order vector \eqref{C-entry-vec} are given by
\begin{equation}
\psi_i(n,m,l)= \varrho_{i}^{+}(a +k_i)^n(b +k_i )^m(c+ k_i)^l+
  \varrho_{i}^{-}(a -k_i)^n(b -k_i )^m(c-k_i)^l,
  \label{psi-gen}
\end{equation}
where $\varrho_{i}^{\pm}$ and $k_i$ are parameters (and in some cases
$c=0$).  In this form the shifts in $l$ (bar-shifts) are not in any way
special and therefore we can equally well define Casoratians where the
shifts are in $n$ (tilde-shifts) or in $m$ (hat-shifts).  For later use
we indicate these three different shifts by the shift operators
$E^{\nu}$, $\nu=1,2,3$, respectively, i.e.,
\[
E^{1}\psi\equiv \t\psi,\quad E^{2}\psi\equiv \h\psi,\quad
E^{3}\psi\equiv \b\psi.
\]
Down shifts are denoted by $E_{\nu}$, $\nu=1,2,3$, respectively.

We can now define a Casoratian w.r.t. the three different kinds of shifts,
\begin{equation}
|\h{N-1}|_{_{[\nu]}}=|\psi, E^{\nu}\psi,(E^{\nu})^2\psi,\cdots,
(E^{\nu})^{N-1}\psi|,\quad (\nu=1,2,3).
\label{caso-def}
\end{equation}
Since
\begin{equation}
(\alpha_\mu-\alpha_\nu)\psi=(E^{\mu}-E^{\nu})\psi,\quad  \mu,\nu=1,2,3,
\end{equation}
where
\begin{equation}
\alpha_1\equiv a,\quad \alpha_2 \equiv b,\quad \alpha_3\equiv c,
\label{alpha-ij}
\end{equation}
we have
\begin{equation*}
(E^{\nu})^j\psi=[E^{\mu}-(\alpha_\mu-\alpha_\nu)]^j\psi,\quad  \mu,\nu=1,2,3,
\end{equation*}
and substituting this into \eqref{caso-def} yields
\begin{equation}
|\h{N-1}|_{_{[1]}}=|\h{N-1}|_{_{[2]}}=|\h{N-1}|_{_{[3]}}.
\end{equation}

When the Casoratians are constructed using $\psi$ of \eqref{psi-gen}
the size on the matrix $N$ indicates the number of solitons and the
set $\{k_i\}_{i=1}^N$ provides the ``velocity'' parameters of the
solitons, while the parameters $\{\rho_{i}^{+},\rho_i^{-}\}_{1}^N$ are
related to the locations of the solitons (by gauge invariance only
their ratio is significant).

We shall here mention that the bilinear equations we get in the paper
are more or less similar to the Hirota-Miwa equation\cite{Z-1,Z-2}
\begin{equation}
  a(b-c)\tau_{n,m,l+1}\tau_{n+1,m+1,l}+b(c-a)\tau_{n,m+1,l}\tau_{n+1,m,l+1}
+c(a-b)\tau_{n+1,m,l}\tau_{n,m+1,l+1}=0,
\label{H-W eq}
\end{equation}
(cf.(\ref{eq:bil-HM},\ref{eq:bil-H3-2M})), or belong to the bilinear
equations\cite{Z-3} which were derived by imposing the
transformation\cite{Z-2}
\begin{equation}
x_j=\sum^{\infty}_{i=1}l_j\frac{a^j_i}{j}
\label{Miwa-trans}
\end{equation}
on Sato's bilinear identity.  The above transformation, referred to as
Miwa transformation, provides a connection between continuous
coordinates $\{ x_j\}$ and discrete ones $\{ l_j\}$, and transforms
the basic continuous plane wave factor
\begin{equation}
\exp\Bigl[{-\sum^{\infty}_{jx=1}x_jp_i+\sum^{\infty}_{j=1}x_jq_i}\Bigr]
\end{equation}
into the discrete one
\begin{equation}
\prod^{\infty}_{j=1}\Bigl(\frac{1-a_j p_i}{1-a_jq_i}\Bigr)^{l_j}.
\label{dis-e}
\end{equation}
The discrete exponential function \eqref{dis-e} corresponding to the
plane wave factor $\rho_i$ (I-2.2), plays a central role in the
discrete $\tau$ function in Hirota's exponential-polynomial form,
while in Casoratians its counterpart has the form \eqref{psi-gen}.

In most cases a bilinear equation that can be solved by an $N\times
N$ Casoratian (or Casoratians) is reduced to a Laplace
expansion of a $2N\times 2N$ determinant with zero value.  For later
convenience we give the following lemmas.
\begin{lem}\cite{Freeman-Nimmo-KP}
\label{L:iden}
\begin{equation}
\sum_{j=1}^{N}|\mathbf{a}_1,\cdots,\mathbf{a}_{j-1},\,
\mathbf{b}\mathbf{a}_{j},\, \mathbf{a}_{j+1},\cdots,
\mathbf{a}_{N}|=\biggl(\sum_{j=1}^{N}b_{j}\biggr)|\mathbf{a}_1,\cdots,
 \mathbf{a}_N|,
\label{iden-1}
\end{equation}
where $\mathbf{a}_j=(a_{1j},\cdots,a_{Nj})^T$ and
$\mathbf{b}=(b_1,\cdots,b_N)^T$ are $N$-order column vectors, and
$\mathbf{b}\mathbf{a}_j$ stands for $(b_{1}a_{1j}, \cdots, b_N
a_{Nj})^{T}$.
\end{lem}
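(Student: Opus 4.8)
The plan is to prove the identity by reducing each determinant on the left-hand side to a cofactor (Laplace) expansion along its modified column, and then to reorganize the resulting double sum into a single row expansion of the unmodified determinant. Write $A$ for the matrix with columns $\mathbf{a}_1,\dots,\mathbf{a}_N$ and entries $a_{ij}$, and let $C_{ij}$ denote the cofactor of the $(i,j)$ entry of $A$. The only structural fact I need is that these cofactors are insensitive to the entries actually sitting in the expanded line, since each $C_{ij}$ is built solely from the complementary minor.

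First I would expand the $j$-th determinant on the left along its $j$-th column. That column is $\mathbf{b}\mathbf{a}_j$, whose $i$-th entry is $b_i a_{ij}$, while every other column coincides with the corresponding column of $A$; hence the relevant cofactors are exactly the $C_{ij}$ of $A$, and the expansion reads $|\mathbf{a}_1,\dots,\mathbf{b}\mathbf{a}_j,\dots,\mathbf{a}_N| = \sum_{i=1}^N b_i a_{ij} C_{ij}$. Summing over $j$ and interchanging the order of summation gives $\sum_{i=1}^N b_i \bigl(\sum_{j=1}^N a_{ij} C_{ij}\bigr)$.

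The key step is to recognize the inner sum $\sum_{j=1}^N a_{ij} C_{ij}$ as the Laplace expansion of $\det A$ along its $i$-th row, so that it equals $|A|$ independently of $i$. Factoring this common value out of the $i$-sum then yields $\bigl(\sum_{i=1}^N b_i\bigr)|A|$, which is the claimed right-hand side (the summation index being a dummy). I do not expect a genuine obstacle here: the whole argument is bookkeeping once the two expansions are lined up. The only point that needs a moment's care is the switch from a column expansion (used to introduce the $b_i$) to a row expansion (used to collapse the $j$-sum back into $\det A$), which is precisely what the reindexing accomplishes.

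As an alternative I could bypass cofactors entirely: let $D=\mathrm{diag}(b_1,\dots,b_N)$ and consider $\det\bigl((I+tD)A\bigr)=\det(I+tD)\,\det A=\prod_{i=1}^N(1+tb_i)\,|A|$. The coefficient of $t$ on the right is $\bigl(\sum_i b_i\bigr)|A|$, while on the left the columns are $\mathbf{a}_j+t\,\mathbf{b}\mathbf{a}_j$, so multilinearity of the determinant makes the coefficient of $t$ equal to $\sum_{j=1}^N|\mathbf{a}_1,\dots,\mathbf{b}\mathbf{a}_j,\dots,\mathbf{a}_N|$. Comparing the two expressions gives the identity at once.
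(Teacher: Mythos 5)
Your proof is correct. Note that the paper does not actually prove Lemma \ref{L:iden} --- it is quoted from the Freeman--Nimmo reference without argument --- so there is no in-paper proof to compare against. Your first argument (cofactor expansion of each modified determinant along its altered column, using that the cofactors $C_{ij}$ are independent of the entries in that column, followed by the swap to a row expansion to recover $\det A$) is the standard derivation of this trace-type identity and is complete as written. The alternative via $\det\bigl((I+tD)A\bigr)=\prod_i(1+tb_i)\,\det A$ and comparison of the coefficient of $t$ is equally valid and arguably cleaner, since it packages the whole bookkeeping into multilinearity of the determinant in its columns; either version would serve as a legitimate proof of the lemma.
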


\begin{lem}\cite{Freeman-Nimmo-KP}
\label{L:lap}
Suppose that $\mathbf{B}$ is an $N\times(N-2)$ matrix and
$\mathbf{a},~ \mathbf{b},~ \mathbf{c},~ \mathbf{d}$ are $N$-order
column vectors, then
\begin{equation}
|\mathbf{B},\mathbf{a}, \mathbf{b}||\mathbf{B}, \mathbf{c},\mathbf{d}|
-|\mathbf{B}, \mathbf{a}, \mathbf{c}||\mathbf{B},\mathbf{b},\mathbf{d}|
+|\mathbf{B},\mathbf{a},\mathbf{d}||\mathbf{B},\mathbf{b},\mathbf{c}|=0.
\label{laplace}
\end{equation}
\end{lem}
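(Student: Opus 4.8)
The statement is the classical three-term Pl\"ucker (equivalently Jacobi/Sylvester) relation among the maximal minors of the $N\times(N+2)$ array $[\mathbf{B},\mathbf{a},\mathbf{b},\mathbf{c},\mathbf{d}]$ that all share the fixed block $\mathbf{B}$. The plan is to obtain it, as is standard for Casoratian identities, as the Laplace expansion of a single $2N\times 2N$ determinant that is identically zero.

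Concretely, I would introduce the bordered matrix
\[
M=\begin{pmatrix}\mathbf{B}&\mathbf{a}&\mathbf{b}&\mathbf{c}&\mathbf{d}&\mathbf{0}\\ \mathbf{0}&\mathbf{a}&\mathbf{b}&\mathbf{c}&\mathbf{d}&\mathbf{B}\end{pmatrix},
\]
where the two $\mathbf{0}$ blocks and the two $\mathbf{B}$ blocks are $N\times(N-2)$, so that $M$ is $2N\times 2N$. The first step is to show $\det M=0$. I would do this by observing that every column of $M$ lies in the subspace $W=\{(\mathbf{x},\mathbf{y})^{T}:\mathbf{x}-\mathbf{y}\in\mathrm{span}\,\mathbf{B}\}$: the left $\mathbf{B}$-columns have $\mathbf{x}-\mathbf{y}=\mathbf{b}_i$, the right ones $\mathbf{x}-\mathbf{y}=-\mathbf{b}_i$, and each of the four middle columns has $\mathbf{x}-\mathbf{y}=\mathbf{0}$. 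Since $\dim W=N+\mathrm{rank}\,\mathbf{B}\le 2N-2<2N$, the $2N$ columns are linearly dependent and $\det M=0$.

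The second step is to Laplace-expand $\det M$ along its first $N$ rows and read off the surviving terms. Because the top $\mathbf{B}$-block on the right and the bottom $\mathbf{B}$-block on the left are zero, any nonzero pair of complementary $N\times N$ minors is forced to send the \emph{entire} left block $\mathbf{B}$ into the top minor and the \emph{entire} right block $\mathbf{B}$ into the bottom minor; the remaining freedom is only in how the four columns $\mathbf{a},\mathbf{b},\mathbf{c},\mathbf{d}$ are split two-and-two between the two minors. This produces exactly the six terms $|\mathbf{B},\cdot,\cdot|\,|\,\cdot,\cdot,\mathbf{B}|$, which after moving each $\mathbf{B}$ back to the front (an even permutation of $2(N-2)$ transpositions) pair up into the three distinct products $|\mathbf{B},\mathbf{a},\mathbf{b}||\mathbf{B},\mathbf{c},\mathbf{d}|$, $|\mathbf{B},\mathbf{a},\mathbf{c}||\mathbf{B},\mathbf{b},\mathbf{d}|$, and $|\mathbf{B},\mathbf{a},\mathbf{d}||\mathbf{B},\mathbf{b},\mathbf{c}|$. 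Equating the expansion to the vanishing value from the first step yields \eqref{laplace}.

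The construction and the rank argument are routine; the one place demanding care is the sign bookkeeping in the second step, namely tracking the Laplace cofactor signs $(-1)^{\sum\mathrm{rows}+\sum\mathrm{cols}}$ together with the column reorderings and checking that they assemble into precisely the alternating pattern $+,-,+$ rather than cancelling term by term. That sign computation is the main obstacle, and it is exactly what pins down the relative signs of the three products in the identity.
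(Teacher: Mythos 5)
Your proposal is correct and is essentially the paper's own argument: the paper also obtains \eqref{laplace} as the Laplace expansion (along the first $N$ rows) of the identically vanishing $2N\times 2N$ determinant $\bigl|\begin{smallmatrix}\mathbf{B}&\mathbf{0}&\mathbf{a}&\mathbf{b}&\mathbf{c}&\mathbf{d}\\ \mathbf{0}&\mathbf{B}&\mathbf{a}&\mathbf{b}&\mathbf{c}&\mathbf{d}\end{smallmatrix}\bigr|$, which is your $M$ up to a column permutation, with the prefactor $\tfrac12$ encoding exactly the six-terms-pairing-into-three phenomenon you describe. Your rank argument for the vanishing of the determinant is a welcome addition (the paper simply asserts it), and the only piece you defer --- the cofactor sign bookkeeping giving the $+,-,+$ pattern --- is likewise left implicit in the paper, which cites Freeman--Nimmo for the result.
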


In fact, the l.h.s. of \eqref{laplace} is just the
Laplace expansion of the following $2N\times 2N$ determinant,
$$
\frac12\left|
\begin{array}{cccccc}
\mathbf{B} & \mathbf{0} & \mathbf{a} & \mathbf{b} & \mathbf{c} & \mathbf{d} \\
\mathbf{0} & \mathbf{B} & \mathbf{a} & \mathbf{b} & \mathbf{c} & \mathbf{d}
\end{array}
\right|\equiv 0.
$$

\section{H1}
\subsection{Background solution and 1-soliton solution}
\subsubsection{The background solution}
The H1 equation is given by
\begin{equation}
{\rm H}1 \equiv (u-\th{u})(\t{u}-\h{u})-(p-q)=0.
\label{H1}
\end{equation}
Using the fixed point idea with transformation $\b u=T(u)=u+c$
we get the side-equations of the  CAC-cube in the form
\[
(\t u-u)^2=r+c^2-p,\quad (\h u-u)^2=r+c^2-q.
\]
For convenience we absorb $r$ into $c^2$, and reparameterize
$(p,q)\to(a,b)$ by
\begin{equation}
  \label{H1-repar}
  p=c^2-a^2,\quad q=c^2-b^2.
\end{equation}
and then the above equations factorize as
\begin{equation}
  \label{H1-0ss-eqs}
  (\t{u}-u-a)(\t{u}-u+a)=0,
\quad
  (\h{u}-u-b)(\h{u}-u+b)=0,
\end{equation}
Since the factor that vanishes may depend on $n,m$ we actually
have to solve
\begin{equation}
  \label{A-lin2eq-2}
  \t{u}-u=(-1)^{\theta}\,a,
\quad
  \h{u}-u=(-1)^{\chi}\,b,
\end{equation}
where $\theta,\,\chi\in\mathbb Z$ may depend on $n,m$.  Furthermore,
since the value of the exponent is only relevant modulo $2$ and
$n^2\equiv n \mod 2$, the exponents must be linear combinations of
$1,n,m,nm$ with coefficients $0$ or $1$.

The integrability condition for \eqref{A-lin2eq-2} leads to
\[
b\left[(-1)^\sigma-(-1)^{\t \sigma}\,\right]=
a\bigl[(-1)^\theta-(-1)^{\h \theta}\,\bigl],
\]
and since $a,b$ are independent this just means that
$\sigma=\sigma(m), \,\theta=\theta(n)$. Since these are defined modulo
2 we can have $\theta=t_1 n+t_0,\,\sigma=s_1 m+s_0$ where
$t_i,s_i\in\{0,1\}$. Furthermore, since $a,b$ were only defined up to
sign we may take $t_0=s_0=0$.

There are now two essentially different cases for each exponent,
either $\theta(n)=0$ or $n$, and $\sigma(m)=0$ or $m$. The choice
$\theta=0$ leads to a solution $u=a n +\dots$ while $\theta=n$ leads
to $u=-\tfrac12 (-1)^n a + \dots$. These solutions can be combined,
and thus we get the following set of possibilities for $u^{0SS}$
\begin{subequations}\label{A-lin2-0ss}
\begin{align}
  \label{A-lin2-0ssll}
a n+b m + \gamma,\\
  \label{A-lin2-0ssel}
\tfrac12(-1)^{n}a+bm+\gamma,\\
  \label{A-lin2-0ssle}
an+\tfrac12(-1)^m b+\gamma,\\
  \label{A-lin2-0ssee}
\tfrac12(-1)^{n}a+\tfrac12(-1)^m b+\gamma.
\end{align}
\end{subequations}

We could also try to get a background solution using the fixed point
idea with the other transformation mentioned for H1 in Table \ref{T:inv},
namely $\b u=T(u)=-u+c$, but it produces the same set: with a
reparameterization
\begin{equation}
  \label{H1m-repar}
  p=r+\alpha^2,\quad q=r+\beta^2,\quad u=y+\tfrac12c.
\end{equation}
the side-equations become
\begin{equation}
  \label{H1m-0ss-eqs}
  (\t{y}-y-\alpha)(\t{y}-y+\alpha)=0,\quad
  (\h{y}-y-\beta)(\h{y}-y+\beta)=0,
\end{equation}
which is as in \eqref{H1-0ss-eqs}. The constant $c$ can be absorbed
into the constant $\gamma$ and the free parameter $r$ redefined when
comparing \eqref{H1-repar},\eqref{H1m-repar}, and thus the background
solutions are as in \eqref{A-lin2-0ss}.

It seems that only \eqref{A-lin2-0ssll} leads to a soliton type
solution so in the following we only consider it.

\subsubsection{Constructing 1SS using a BT}
The BT generating 1SS for H1 is
\begin{subequations}
\label{BT-H1}
\begin{equation}
(u-\tb{u})(\t{u}-\b{u})=p-\varkappa,
\label{BT-H1-a}
\end{equation}
\begin{equation}
(u-\hb{u})(\b{u}-\h{u})=\varkappa-q.
\label{BT-H1-b}
\end{equation}
\end{subequations}
Here $u$ is the seed solution \eqref{A-lin2-0ssll}, $\varkappa$ is the
parameter in the bar-direction, and we search for a
new solution $\b{u}$ of the form
\begin{equation}
\b{u}=\b{u}_{0}+v,
\label{u-bar-H1}
\end{equation}
where $\b{u}_{0}$ is the bar-shifted background solution \eqref{A-lin2-0ssll}:
\begin{equation}
\b{u}_{0}=a n +b m +k+\gamma,
\label{u-b-theta-H1}
\end{equation}
where $k$ is related to $\varkappa$ by
\begin{equation}
\varkappa=c^2-k^2.
\label{s-k-H1}
\end{equation}

Substituting \eqref{A-lin2-0ssll}, \eqref{u-bar-H1} and
\eqref{u-b-theta-H1} into \eqref{BT-H1} yields the
desired form
\begin{equation}
\t{v}=\frac{Ev}{v+F},\quad\h{v}=\frac{Gv}{v+H},
\label{v-H1-1}
\end{equation}
where
\begin{equation*}
E=-(a+k),\quad F=-(a-k),\quad G=-(b +k),\quad H=-(b-k).
\end{equation*}

Now matrices $\mathcal{N,M}$ defined in \eqref{NM-matricesEG} are
$n,m$ independent and since $E-F=G-H=-2k$ they commute and we can take
$\Lambda=\Lambda'=1$. Then, following the method presented in Section
\ref{S:1ss}, we find
\begin{equation}
\Phi(n,m)=
\left(\begin{array}{cc}
      E^nG^m               & 0\\
      \frac{E^nG^m-F^nH^m}{-2k}     & F^nH^m
      \end{array}
\right)\Phi(0,0),
\label{Phi-nm-H1}
\end{equation}
and if we let
\begin{equation}\label{H1-1ss}
\rho_{n,m}=\biggl(\frac{E}{F}\biggr)^n\biggl(\frac{G}{H}\biggr)^m \rho_{0,0}
=\biggl(\frac{a +k}{a -k}\biggr)^n\biggl(\frac{b + k}{b-k}\biggr)^m \rho_{0,0},
\end{equation}
then we obtain
\begin{equation}\label{H1-v1ss}
v_{n,m}=\frac{-2k \rho_{n,m}}{1+\rho_{n,m}}.
\end{equation}
Finally we obtain the 1SS for H1:
\begin{equation}
u_{n,m}^{1SS}=(an+bm+\gamma)+k+\frac{-2k \rho_{n,m}}{1+\rho_{n,m}}.
\label{u1ss-H1}
\end{equation}

\subsection{Multi-soliton solutions}
In an explicit form the 1SS given above [\eqref{u1ss-H1} with
\eqref{H1-1ss}] is
\begin{equation}\label{1SS-H1}
\begin{array}{rl}
  u_{n,m}^{1SS} & = a n + b m +\gamma +
  \frac{k(1-\rho_{n,m})}{1+\rho_{n,m}}\\&
 =  a n + b m +\gamma
  -\frac{k[\varrho^{+}(a +k)^n(b +k)^m -\varrho^{-}(a - k)^n(b - k)^m]}
  {\varrho^{+}(a +k)^n(b +k )^m+ \varrho^{-}(a -k)^n(b -k)^m},
\end{array}
\end{equation}
where we have separated $\rho_{0,0}$ to $\varrho^-/\varrho^+$.

The above suggests that the basic ingredient in constructing an NSS is
$\psi$ of \eqref{psi-gen} with $c=0$, i.e.,
\begin{equation}
  \psi_i(n,m,l;k_i)=
  \varrho_{i}^{+}k_i^l (a +k_i)^n(b +k_i )^m+
  \varrho_i^{-}(-k_i)^l (a -k_i)^n(b -k_i)^m.
\label{entry-H1-1}
\end{equation}
Here the index $l$ is needed as a column index in the
Casorati matrix.

The form of the 1SS \eqref{1SS-H1} suggests a generalization to NSS,
and after checking explicitly the 2SS and 3SS, we arrive at the following
proposition:
\begin{prop}\label{P:H1-1}
$N$-soliton solution for H1 is given by
\begin{equation}
  u_{n,m}^{NSS}  = a n + b m +\gamma
  - \frac{g}{f},
\label{uNss-H1}
\end{equation}
where
$f=|\h{N-1}|_{_{[3]}},\,g=|\h{N-2},N|_{_{[3]}}$, with $\psi$
given by \eqref{entry-H1-1}.
\end{prop}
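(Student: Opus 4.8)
The plan is to follow the Hirota--Casoratian route: first convert \eqref{H1} into a bilinear system under the substitution \eqref{uNss-H1}, then verify that system on the Casoratians \eqref{caso-def} with entries \eqref{entry-H1-1}. Writing $u=an+bm+\gamma+w$ with $w=-g/f$, the linear background contributes only constants to the shifts, so $u-\th u=-(a+b)+(w-\th w)$ and $\t u-\h u=(a-b)+(\t w-\h w)$. Using $p-q=b^2-a^2=-(a+b)(a-b)$ from \eqref{H1-repar} and clearing the denominator $f\,\t f\,\h f\,\th f$, equation \eqref{H1} becomes the single quadratic relation
\begin{equation*}
-(a+b)\,f\,\th f\,(\h g\,\t f-\t g\,\h f)+(a-b)\,\t f\,\h f\,(f\,\th g-g\,\th f)+(f\,\th g-g\,\th f)(\h g\,\t f-\t g\,\h f)=0 .
\end{equation*}
I would then show that this master relation is implied by the two bilinear equations
\begin{equation*}
\h g\,\t f-\t g\,\h f+(a-b)(\t f\,\h f-f\,\th f)=0,\qquad f\,\th g-g\,\th f+(a+b)(\t f\,\h f-f\,\th f)=0 ,
\end{equation*}
by setting $D=\t f\,\h f-f\,\th f$, substituting $\h g\,\t f-\t g\,\h f=-(a-b)D$ and $f\,\th g-g\,\th f=-(a+b)D$, whereupon the master relation collapses to $(a+b)(a-b)\,D\,(f\,\th f-\t f\,\h f+D)=0$, the bracket vanishing by the definition of $D$. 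Both equations are in HB form---each monomial has $n$- and $m$-index sums $(1,1)$---and are therefore gauge invariant by Proposition \ref{P:gauge}.

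Next I would record the entry shift relations. From \eqref{entry-H1-1}, with $\b\psi_i$ the direction-$3$ shift $l\mapsto l+1$ of $\psi_i$,
\begin{equation*}
\t\psi_i=\b\psi_i+a\,\psi_i,\qquad \h\psi_i=\b\psi_i+b\,\psi_i,
\end{equation*}
the same $a,b$ for every index $i$; these are the $c=0$ specialisation of the relation behind \eqref{alpha-ij}. Hence the tilde- and hat-shifts act triangularly on the column index. Expanding by multilinearity, each of $\t f,\h f,\th f$ and $\t g,\h g,\th g$ becomes a finite combination of direction-$3$ Casoratians whose column sets are obtained from $\{0,1,\dots,N-1\}$ by pushing one or two indices upward; for example $\t f=|1,\dots,N|+a\,|0,2,\dots,N|+\cdots$, and $\h f$ is the same with $a\to b$.

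Finally I would verify the two bilinear equations on the Casoratians. After inserting the reductions of the previous step, each equation becomes an alternating sum of products of two $N\times N$ Casoratians that share a common block of $N-2$ columns; this is precisely the three-term identity of Lemma \ref{L:lap}, i.e.\ the Laplace expansion of a $2N\times2N$ determinant with two coincident $N$-column blocks, which vanishes identically. Lemma \ref{L:iden} is used along the way to absorb the scalar factors $a\pm b$ generated by the triangular shifts. The case $N=1$ is exactly the $1$SS \eqref{1SS-H1}, already known to solve \eqref{H1} (and one checks directly that both bilinear equations then reduce to $0=0$ via $\psi^{(2)}=k^2\psi^{(0)}$, $\psi^{(3)}=k^2\psi^{(1)}$). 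The main obstacle is the bookkeeping of this last step: after the upward index pushes one must identify, for each bilinear equation, the correct common $(N-2)$-column block $\mathbf B$ and the four remaining vectors so that the combination matches the pattern $|\mathbf B,\mathbf a,\mathbf b||\mathbf B,\mathbf c,\mathbf d|-|\mathbf B,\mathbf a,\mathbf c||\mathbf B,\mathbf b,\mathbf d|+|\mathbf B,\mathbf a,\mathbf d||\mathbf B,\mathbf b,\mathbf c|$ of Lemma \ref{L:lap}, and verify that every residual Casoratian produced by the shifts either cancels in pairs or is itself of this Plücker type.
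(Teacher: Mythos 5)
Your bilinearization step is correct and is essentially the paper's: the master quartic relation you derive is exactly what the paper encodes in the identity $\mathrm{H}1=-\bigl[\mathcal{H}_1+(a-b)f\th f\bigr]\bigl[\mathcal{H}_2+(a+b)\h f\t f\bigr]/(f\h f\t f\th f)+(a^2-b^2)$, your two bilinear equations coincide (up to sign) with \eqref{eq:bil-H1}, and the shift relations $\t\psi=\b\psi+a\psi$, $\h\psi=\b\psi+b\psi$ are the correct $c=0$ specialisation of \eqref{rela-I-a}. The identification of Lemma \ref{L:lap} as the closing tool is also right.

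The gap is in the final verification, and it is not mere bookkeeping. Expanding the \emph{up}-shifted determinants by multilinearity does not produce ``an alternating sum of products of two Casoratians sharing a common block of $N-2$ columns'': since $\t\psi(l)=\psi(l+1)+a\psi(l)$ acts on every column, one gets e.g.\ $\t f=\sum_{J=0}^{N}a^{J}\,|0,\dots,J-1,J+1,\dots,N|_{_{[3]}}$, an $(N+1)$-term sum, and likewise for $\h f,\th f,\t g,\h g$; the products in $\mathcal H_1,\mathcal H_2$ then contain many cross terms that are not organised as a single three-term Pl\"ucker relation. The device that makes the argument close in one application of Lemma \ref{L:lap} — and which your plan omits — is to prove the \emph{down}-shifted versions of the bilinear equations (legitimate since they hold for all $n,m$), because iterating $(a-c)\dt\psi(l)=\psi(l)-\dt\psi(l+1)$ column by column collapses each down-shifted determinant to a single Casoratian with one or two modified columns (formulas \eqref{I-a}, \eqref{I-k}, \eqref{I-l} with $c=0$), and to group the combinations $\dt g+a\dt f$ and $\dh g+b\dh f$ into single determinants via \eqref{I-g-a}--\eqref{I-g-b}. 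Only after this rewriting does, say, \eqref{eq:bil-H1-1d-al} become literally $|\mathbf{B},\mathbf{a},\mathbf{b}||\mathbf{B},\mathbf{c},\mathbf{d}|-|\mathbf{B},\mathbf{a},\mathbf{c}||\mathbf{B},\mathbf{b},\mathbf{d}|+|\mathbf{B},\mathbf{a},\mathbf{d}||\mathbf{B},\mathbf{b},\mathbf{c}|$ with $\mathbf{B}=(\h{N-3})$, $\mathbf{a}=\psi(N-2)$, $\mathbf{b}=\psi(N-1)$, $\mathbf{c}=\dh\psi(N-2)$, $\mathbf{d}=\dt\psi(N-2)$ (and the second equation additionally needs the auxiliary-function trick behind \eqref{I-c}, \eqref{I-l} to handle the genuine up-shift $\h f$). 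Without this reduction your last step, as described, would not terminate in a single application of Lemma \ref{L:lap}.
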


Note that this solution is the same as (5.29) of Part I with $A=0$.
Indeed using (I-2.1,2.3,2.8c) one finds that
\[
S^{(0,0)}-\sum k_i = \frac gf,
\]
and the parameters $\varrho_i^{\pm}$ in (\ref{entry-H1-1}) and $c_i$ in (I-2.3) are
related by
\begin{equation}\label{par-comp}
\frac{\varrho_i^+}{\varrho_i^-}=\frac{c_i}{2k_i}\prod_{j\neq i}\frac{k_j-k_i}{k_j+k_i}.
\end{equation}
However, the solution \eqref{uNss-H1} contains more freedom in the
form of the additional parameter $c$, cf. \eqref{H1-repar} and
(I-5.1d).

\subsubsection{The bilinear form of H1}
In order to prove Proposition \ref{P:H1-1} we write \eqref{H1} in
bilinear form. First we introduce a dependent variable transformation
\begin{equation}
  u_{n,m}^{NSS}  = a n + b m +\gamma
  - \frac{g_{n,m}}{f_{n,m}}.
\label{depvar-H1}
\end{equation}
When this is substituted into \eqref{H1} we get a rational expression
quartic in $f,g$. In order to split this expression into bilinear
equations we note that if $f=|\h{N-1}|_{_{[3]}}$, $g=|\h{N-2},N|_{_{[3]}}$, then for
$N=2$ one can quickly scan for possible equations solved by them from
the set
\[
a_0g\th f+a_1\h g\t f+a_2 \t g\h f+a_3\th g f+
a_4f\th f+a_5\h f\t f+a_6 g\th g+a_7\h g\t g=0,
\]
and find the bilinear equations
\begin{subequations}  \label{eq:bil-H1}
\begin{eqnarray}
\mathcal{H}_1 &\equiv & \h g\t f-\t g\h f+(a-b)(\h f\t f- f\th f)=0,
\label{eq:bil-H1-1}\\
\mathcal{H}_2 &\equiv & g\th f-\th g f+(a+b)(f\th f-\h f\t f)=0
\label{eq:bil-H1-2}.
\end{eqnarray}
\end{subequations}
After this it is easy to show that
\[
\mathrm{H}1\equiv (u-\th{u})(\t{u}-\h{u})-p+q
=-\bigl[\mathcal{H}_1+(a-b)f\th f\bigr]
\bigl[\mathcal{H}_2+(a+b)\h f\t f\bigr]/(f\h f\t f \th f)
+(a^2-b^2),
\]
and thus the pair \eqref{eq:bil-H1} can be taken as the bilinear form of
\eqref{H1}.

\subsubsection{Proof of the Proposition \ref{P:H1-1}}
Now it remains to show that the $f,g$ given in Proposition
\ref{P:H1-1} solve equations \eqref{eq:bil-H1}.  We prove
\eqref{eq:bil-H1-1} in its tilde-hat-downshifted version,
\begin{equation}
  \label{eq:bil-H1-1d}
\dt g\dh f-\dh g\dt f +(a-b)(\dh f\dt f- f\dth f)=0.
\end{equation}
Comparing $\psi_i(n,m,l)$ given by \eqref{entry-H1-1} and
\eqref{entry-H3-1}, we can use the formulas given in Appendix
\ref{A:a} with $c\equiv 0$, and in this subsection we always have
shifts in the $l$-index, i.e., we take $\kappa=3$ for the formulas in
Appendix \ref{A:a}.  Let us write \eqref{eq:bil-H1-1d} as
\begin{equation}
  \label{eq:bil-H1-1d-al}
  -(a-b)f\dth f + \dh f (\dt g+a\dt f) -\dt f(\dh g + b\dh f)=0.
\end{equation}
In this formula, $f=|\h{N-1}|_{_{[3]}}$, and for $\dth f$, $\dh f$,
$\dt g+a\dt f$, $\dt f$ and $\dh g + b\dh f$ we use \eqref{I-k} with
$\mu=1$ and $\nu=2$, \eqref{I-a} with $\mu=2$, \eqref{I-g-a} with
$\mu=1$, \eqref{I-a} with $\mu=1$ and \eqref{I-g-a} with $\mu=2$,
respectively. Then we have
\begin{eqnarray*}
  &~& a^{N-2}b^{N-2}[-(a-b)f\dth f + \dh f (\dt g+a\dt f)
 -\dt f(\dh g + b\dh f)]\\
  &=& -|\h{N-1}|_{_{[3]}} |\h{N-3},\dh{\psi}(N-2),\dt{\psi}(N-2)|_{_{[3]}}\\
  &~& +|\h{N-2},\dh{\psi}(N-2)|_{_{[3]}} |\h{N-3},N-1,\dt{\psi}(N-2)|_{_{[3]}}\\
  &~& -|\h{N-2},\dt{\psi}(N-2)|_{_{[3]}} |\h{N-3},N-1,\dh{\psi}(N-2)|_{_{[3]}} \\
  &=& 0,
\end{eqnarray*}
where we have made use of Lemma \ref{L:lap} in which
$\mathbf{B}=(\h{N-3})$, and $\mathbf{a}=\psi(N-2)$,
$\mathbf{b}=\psi(N-1)$, $\mathbf{c}=\dh{\psi}(N-2)$ and
$\mathbf{d}=\dt{\psi}(N-2)$.

For \eqref{eq:bil-H1-2} we use its tilde-down-shifted version
\begin{equation}
\dt g\h f-\h g\dt f+(a+b)(\dt f\h f-\dt{\h f}f)=0,
\label{eq:bil-H1-2d}
\end{equation}
and rewrite it as
\begin{equation*}
-(a+b) f \dt{\h f} +\h f (\dt g +a\dt f) -\dt f (\h g -b \h f)=0,
\end{equation*}
where $f=|\h{N-1}|_{_{[3]}}$, and for $\dt{\h f}$, $\h f$, $\dt g+a\dt
f$, $\dt f$ and $\h g - b\h f$ we use \eqref{I-l} with $\mu=2$ and
$\nu=1$, \eqref{I-c} with $\mu=2$, \eqref{I-g-a} with $\mu=1$,
\eqref{I-a} with $\mu=1$ and \eqref{I-g-b} with $\mu=2$, respectively.
Then we have
\begin{eqnarray*}
  &~& \frac{1}{|\Omega_2|}a^{N-2}b^{N-2}[-(a+b) f \dt{\h f}
 +\h f (\dt g +a\dt f) -\dt f (\h g -b \h f)]\\
  &=& -|\h{N-1}|_{_{[3]}} |\h{N-3},\dt{\psi}(N-2),
\c E^{2}{\psi}(N-2)|_{_{[3]}}\\
  &~& -|\h{N-2},\c E^{2}{\psi}(N-2)|_{_{[3]}} |\h{N-3},N-1,
\dt{\psi}(N-2)|_{_{[3]}}\\
  &~& +|\h{N-2},\dt{\psi}(N-2)|_{_{[3]}} |\h{N-3},N-1,
\c E^{2}{\psi}(N-2)|_{_{[3]}} \\
  &=& 0,
\end{eqnarray*}
where we have made use of Lemma \ref{L:lap} in which
$\mathbf{B}=(\h{N-3})$, and $\mathbf{a}=\psi(N-2)$,
$\mathbf{b}=\psi(N-1)$, $\mathbf{c}=\dt{\psi}(N-2)$ and $\mathbf{d}=\c
E^{2}{\psi}(N-2)$. \qed

\section{H2}
\subsection{Background solution}
The equation H2 is given by
\begin{equation}
\mathrm{H}2\equiv (u-\th{u})(\t{u}-\h{u})-(p-q)(u+\t{u}+\h{u}+\th{u}+p+q)=0.
\label{H2}
\end{equation}

After reparameterization
\begin{equation}
  \label{H2-repar}
  p=r-a^2,\quad q=r-b^2,
\end{equation}
the equations on sides become
\begin{equation}
  \label{H2-00s-eqs}
(\t u-u)^2-2a^2(\t u+u)+a^2(a^2-2r)=0,\quad
(\h u-u)^2-2b^2(\h u+u)+b^2(b^2-2r)=0.
\end{equation}
After the further substitution
\begin{equation}
  \label{H2-x-sub}
  u=y^2-\tfrac12r
\end{equation}
the equations \eqref{H2-00s-eqs} factorize as
\begin{subequations}
\label{H2-fac}
\begin{eqnarray}
(\t y +y+a)(\t y +y-a)(\t y -y+a)(\t y -y-a)=0,\\
(\h y +y+b)(\h y +y-b)(\h y -y+b)(\h y -y-b)=0.
\end{eqnarray}
\end{subequations}
Thus we have to solve
\begin{equation}\label{A2-2signs}
\t y-(-1)^\sigma\, y=(-1)^\theta\, a, \quad
\h y-(-1)^\rho\, y=(-1)^\phi\, b.
\end{equation}
where the exponents $\sigma,\theta,\rho,\phi\in\mathbb Z$ are again
some linear combination of $1,n,m,nm$ with coefficients $0$ or
$1$.  Equations \eqref{A2-2signs} are satisfied if
\[
\rho=n\, s_2+r_1,\quad \sigma=m\, s_2+s_1,\quad
\theta=m\,\rho+\sigma+n(t_n+s_1)+t_1,\quad
\phi=n\,\sigma+\rho+m(p_m+r_1)+p_1,
\]
but most of the freedom is superfluous: Since the sign of $y$ was left
undetermined in \eqref{H2-x-sub} we can redefine
\[
y_{n,m}\to (-1)^{nm\,s_2+n\,s_1+m\,r_1}\, y_{n,m}
\]
and then the equations simplify to
\[
\t y-y=(-1)^{n\,t_n+t_1}a,\quad
\h y-y=(-1)^{m\,p_m+p_1}b,
\]
already analyzed after \eqref{A-lin2eq-2}. Thus we find the possible
background solutions $u^{0SS}_{n,m}$ of the form
\begin{subequations}\label{H2-0ss}
\begin{eqnarray}
  \label{H2-0ssll}
 &&[a n+b m + \gamma]^2-\tfrac12r,\\
  \label{H2-0ssel}
&&[\tfrac12(-1)^{n}a+mb+\gamma]^2-\tfrac12r,\\
  \label{H2-0ssle}
&&[na+\tfrac12(-1)^mb+\gamma]^2-\tfrac12r,\\
  \label{H2-0ssee}
&&[\tfrac12(-1)^{n}a+\tfrac12(-1)^mb+\gamma]^2-\tfrac12r.
\end{eqnarray}
\end{subequations}

\subsection{1-soliton solution}
Note that H2 is invariant under simultaneous translation of $p,q$ by
$t$ and $u$ by $-\tfrac12 t$. We use this freedom to eliminate $r$ in
order to simplify the presentation.

Now we take \eqref{H2-0ssll} as the seed solution to construct
$u_{n,m}^{1SS}$ for H2 through its B\"acklund transformation
\begin{subequations}
\label{BT-H2}
\begin{equation}
(u-\tb{u})(\t{u}-\b{u})=(p-\varkappa)(u+\t{u}+\b{u}+\tb{u}+p+\varkappa),
\label{BT-H2-a}
\end{equation}
\begin{equation}
(u-\hb{u})(\b{u}-\h{u})=(\varkappa-q)(u+\b{u}+\h{u}+\bh{u}+\varkappa+q).
\label{BT-H2-b}
\end{equation}
\end{subequations}
where $u$ is the seed solution \eqref{H2-0ssll} and we search for a new
solution $\b{u}$ of the form
\begin{equation}
\b{u}=\b{u}_{0}+v,
\label{u-bar-H2}
\end{equation}
where $\b{u}_{0}$ is the bar-shifted background solution \eqref{H2-0ssll}:
\begin{equation}
\b{u}_{0}=(a n +b m +k+\gamma)^2,
\label{u-b-theta-H2}
\end{equation}
with
\begin{equation}
\varkappa=k^2.
\label{s-k-H2}
\end{equation}
Substituting these
into \eqref{BT-H2} yields \eqref{NM-matricesEG} with
\begin{eqnarray*}
E=-2(k+a)[(n+1)a+mb+\gamma],\quad
F=-2(k-a)[na+mb+\gamma],\\
G=-2(k+b)[na+(m+1)b+\gamma],\quad
H=-2(k-b)[na+mb+\gamma].
\end{eqnarray*}
and the corresponding matrices $\mathcal{N,M}$ are compatible if we
take
\begin{equation}
\Lambda=\Lambda'=-1/(2U_{nm}),\quad U_{n,m}:=a n +b m +\lambda.
\label{U-H2}
\end{equation}
With this $U$ we obtain  \eqref{NM-matricesST} with
\begin{equation}
S=a+k,\quad \Delta=a -k,\quad T=b+k,\quad  \Omega=b-k,\quad \sigma=\tau=-1/2.
\end{equation}
Then defining
\begin{equation}\label{H2-1ss}
\rho_{n,m}=\biggl(\frac{S}{\Delta}\biggr)^n\biggl(\frac{T}{\Omega}\biggr)^m \rho_{0,0}
=\biggl(\frac{a+ k}{a -k}\biggr)^n\biggl(\frac{b + k}{b-k}\biggr)^m \rho_{0,0},
\end{equation}
we find
\begin{equation}
v_{n,m}=\frac{-4k U_{n,m} \rho_{n,m}}
{1+\rho_{n,m}},
\end{equation}
and finally we get  the 1-soliton for H2 in the form
\begin{equation}
u_{n,m}^{1SS}  = U_{n,m}^2
+2k U_{n,m}\frac{1-\rho_{n,m}}{1+\rho_{n,m}} +k^2.
\label{1ss-H2}
\end{equation}

\subsection{Multi-soliton solution}
Motivated by the structure of 1SS \eqref{1ss-H2} (cf. \eqref{1SS-H1}),
and after checking 2- and 3-soliton solutions we propose the following
Casoratian expression for the NSS:
\begin{equation}
u_{n,m}^{NSS}  =U_{n,m}^2
-2 U_{n,m} \frac{|\h{N-2},N|_{_{[3]}}}{|\h{N-1}|_{_{[3]}}}
+\frac{|\h{N-3},N-1,N|_{_{[3]}}+|\h{N-2},N+1|_{_{[3]}}}{|\h{N-1}|_{_{[3]}}},
\label{Nss-H2}
\end{equation}
where the matrix entries are as for H1 \eqref{entry-H1-1}. In order to
prove this we derive a bilinear form of H2. We propose
\begin{equation}
u_{n,m}^{NSS}  =U_{n,m}^2
-2 U_{n,m} \frac{g}{f}
+\frac{h+s}{f},
\label{trans-H2}
\end{equation}
where $f$, $h$ and $s$ should satisfy
\begin{equation}
h-s=\gamma f,
\label{cond-fhs}
\end{equation}
where $\gamma$ is some constant.  Indeed, if
\begin{equation}\label{H2-casdef}
  f=|\h{N-1}|_{_{[3]}},\,  g=|\h{N-2},N|_{_{[3]}},
\, s=|\h{N-3},N-1,N|_{_{[3]}},\,  h=|\h{N-2},N+1|_{_{[3]}},
\end{equation}
then noting that
$(E^3)^2\psi_i(n,m,l)=\bb\psi_i(n,m,l)=\psi_i(n,m,l+2)=k_i^2
\psi_i(n,m,l)$ and using Lemma \ref{L:iden} we have
$\bigl(\sum_{i=1}^{N}k^2_{i}\bigr) f=h-s,$
i.e. $\gamma=\sum_{i=1}^{N}k^2_{i}$.

The solution \eqref{trans-H2} is the same as (I-5.26) with $A=0$ and
\[
S^{(0,0)} = {\textstyle \sum_j} k_j + \frac{g}{f},\quad
2S^{(0,1)} -2({\textstyle \sum_j} k_j)S^{(0,0)}+ ({\textstyle \sum_j}
k_j)^2 = \frac {h+s}f,
\]
which can be found by using (I-2.1,2.2,2.3,2.8c,5.23).

Under condition \eqref{cond-fhs}, H2 can be represented through the following
bilinear system,
\begin{subequations}
\label{eq:bil-H2}
\begin{eqnarray}
\mathcal{H}_1 &\equiv & \h g\t f-\t g\h f+(a-b)(\h f\t f- f\th f)=0,
 \label{eq:bil-H2-1}\\
\mathcal{H}_2 &\equiv & g\th f-\th g f+(a+b)(f\th f-\h f\t f)=0,
 \label{eq:bil-H2-2}\\
\mathcal{H}_3 &\equiv & -(a+b)\h{f}\t{g} + a\th{f} g +b f \th{g}+
\th f h -f \th{h}=0,\label{eq:bil-H2-3}\\
\mathcal{H}_4 &\equiv & -(a-b)f\th{g} + a \t{f}\h{g}-b\h{f} \t{g} +
\t{f}\,\h{h}-\h f\, \t{h}=0,\label{eq:bil-H2-4}\\
\mathcal{H}_5 &\equiv & b(\h f g - f \h{g}) + f \h h+\h f s -g\h{g}=0,
\label{eq:bil-H2-5}
\end{eqnarray}
\end{subequations}
in which $\mathcal{H}_1$ and $\mathcal{H}_2$ already appeared in
\eqref{eq:bil-H1}.  In terms of the above bilinear equations H2 can
be given as
\begin{equation}
\mathrm{H}2=\sum^{5}_{i=1}\mathcal{H}_iP_i,
\end{equation}
with
\begin{equation*}
\begin{array}{rl}
  P_1=& -4(a+b)\Bigl[(\t U\th U-a^2+ b^2)\t f \h f-
\th U \h f\t g-(a+b)f\th g \Bigr],\\
  P_2=& -4\Bigl[(a-b)(\h U\th U-a^2+ b^2)\t f \h f
                      +(\t U\th U-a^2+ b^2)\t f \h g -
\t U\th U \h f\t g-(a-b)\t U f\th g \Bigr],\\
  P_3=& 4\Bigl[(a-b)U\t f\h f+\h U\t f\h g-\t U\h f\t g-
\t f\, \h h+\h f\,\t h \Bigr],\\
  P_4=& 4\Bigl[(a+b)(\h U f\th f-\h f \t g) +\t U(\th f g -f\th g)\Bigr],\\
  P_5=& 4(a^2-b^2)\t f\th f,
\end{array}
\end{equation*}
where $U$ was defined in \eqref{U-H2}. It remains to prove the following
\begin{prop}
\label{P:H2-Nss}
  The Casoratian type determinants $f$, $g$, $h$ and $s$
given in \eqref{H2-casdef}
with entries given by \eqref{entry-H1-1} solve the set of bilinear
equations \eqref{eq:bil-H2}.
\end{prop}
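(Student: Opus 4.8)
The plan is to exploit the fact that the five bilinear equations \eqref{eq:bil-H2} fall into two groups. The first two, $\mathcal{H}_1$ and $\mathcal{H}_2$, are literally the H1 equations \eqref{eq:bil-H1} and involve only $f=|\h{N-1}|_{_{[3]}}$ and $g=|\h{N-2},N|_{_{[3]}}$; since these Casoratians are built from the same entries \eqref{entry-H1-1}, the verification already carried out in the proof of Proposition \ref{P:H1-1} applies verbatim and nothing new is needed. Hence the real content is to establish $\mathcal{H}_3$, $\mathcal{H}_4$ and $\mathcal{H}_5$, the equations that actually involve the higher Casoratians $h=|\h{N-2},N+1|_{_{[3]}}$ and $s=|\h{N-3},N-1,N|_{_{[3]}}$.

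The key structural observation I would use is that all four determinants share the block of columns $0,1,\dots,N-3$. Writing $\mathbf{B}=(\h{N-3})$ for this common $N\times(N-2)$ block, we have $f=|\mathbf{B},N-2,N-1|$, $g=|\mathbf{B},N-2,N|$, $h=|\mathbf{B},N-2,N+1|$ and $s=|\mathbf{B},N-1,N|$, so that after suitable shifts every factor in $\mathcal{H}_3$, $\mathcal{H}_4$, $\mathcal{H}_5$ can be brought to the form $|\mathbf{B},\cdot,\cdot|$. Concretely, for each equation I would first pass to a convenient down-shifted version (as in the H1 proof, where a tilde-hat-down-shift cleared the highest shifts) and then replace each shifted determinant by the corresponding column-modified Casoratian using the expansion formulas of Appendix \ref{A:a} with $\kappa=3$ (the same \eqref{I-a}, \eqref{I-g-a} type relations used for H1, together with their analogues raising a column to index $N$ or $N+1$). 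The aim of this rewriting is to turn the products appearing in each equation into the three products $|\mathbf{B},\mathbf{a},\mathbf{b}||\mathbf{B},\mathbf{c},\mathbf{d}|$, $|\mathbf{B},\mathbf{a},\mathbf{c}||\mathbf{B},\mathbf{b},\mathbf{d}|$, $|\mathbf{B},\mathbf{a},\mathbf{d}||\mathbf{B},\mathbf{b},\mathbf{c}|$ of the Plücker identity of Lemma \ref{L:lap}, where $\mathbf{a},\mathbf{b},\mathbf{c},\mathbf{d}$ are the four extra columns (plain and singly shifted) drawn from indices $N-2,\dots,N+1$; the identity then forces the alternating sum to vanish. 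I would also note that the constraint \eqref{cond-fhs}, $h-s=\gamma f$ with $\gamma=\sum_i k_i^2$, is precisely the relation already obtained from Lemma \ref{L:iden} using $(E^3)^2\psi_i=k_i^2\psi_i$, so that the ansatz \eqref{trans-H2} is consistent.

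The main obstacle, I expect, is $\mathcal{H}_5$, because it contains the term $g\h g$, which is quadratic in the ``$N$-column'' Casoratian rather than pairing $f$ with a higher determinant; unlike $\mathcal{H}_1$--$\mathcal{H}_4$ its surviving products do not line up with a single Plücker relation automatically. Here I would choose the shift (a single hat-down-shift looks most promising) so that $g$, $\h g$, $h$, $s$ are simultaneously expressed over the common block $\mathbf{B}=(\h{N-3})$, and then identify the four columns $\mathbf{a},\mathbf{b},\mathbf{c},\mathbf{d}$ that make $b(\h f g-f\h g)+f\h h+\h f s-g\h g$ the Laplace expansion of a vanishing $2N\times 2N$ determinant as in Lemma \ref{L:lap}. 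The delicate bookkeeping throughout is to keep the base block common across every determinant after the shifts, since this is exactly what permits the Plücker collapse; once the correct down-shifted version of each equation and the matching Appendix formulas are in place, each $\mathcal{H}_i$ reduces, just as for H1, to a single application of Lemma \ref{L:lap}.
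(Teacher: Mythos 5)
Your proposal is correct and follows essentially the same route as the paper: $\mathcal{H}_1,\mathcal{H}_2$ are inherited from the H1 proof, and each of $\mathcal{H}_3,\mathcal{H}_4,\mathcal{H}_5$ is reduced, after a suitable shift and the Appendix \ref{A:a} substitutions with $\kappa=3$, to a single application of Lemma \ref{L:lap} over the common block $\mathbf{B}=(\h{N-3})$ --- including your key observation that in $\mathcal{H}_5$ the $g\h g$ term must be absorbed into the combinations $\h h-b\h g$ and $\h g-b\h f$ so that the three products $f(\h h-b\h g)$, $g(\h g-b\h f)$, $\h f\,s$ line up with one Pl\"ucker relation (the paper does this with columns $\psi(N-2),\psi(N-1),\psi(N),\c E^{2}\psi(N-2)$ and no down-shift, but that is an inessential difference from your suggested hat-down-shift).
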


\begin{proof}
  Among \eqref{eq:bil-H2} $\mathcal{H}_1$ and $\mathcal{H}_2$ have
  been proven before.  Next we prove \eqref{eq:bil-H2-3} in the
  following form
\begin{equation}
-(a+b)\dt {\h{f}} {g} + \h{f}(\dt h+ a \dt g) - \dt f(\h h- b \h{g})=0,
\label{dt-BH3}
\end{equation}
which is a down-tilde-shifted version of the original one.  Since
$\psi_i(n,m,l)$ given by \eqref{entry-H1-1} is just \eqref{entry-H3-1}
with $c=0$, we use the formulas given in Appendix \ref{A:a} with
$c\equiv 0$ and $\kappa=3$.  In \eqref{dt-BH3}
$g=|\h{N-2},N|_{_{[3]}}$, and for $\dt {\h{f}}$, $\h f$, $\dt h+a\dt
g$, $\dt f$ and $\h h-b\h g$ we use \eqref{I-l} with $\mu=2$ and
$\nu=1$, \eqref{I-c} with $\mu=2$, \eqref{I-h-a} with $\mu=1$,
\eqref{I-a} with $\mu=1$ and \eqref{I-h-b} with $\mu=2$, respectively.
Then we have
\begin{eqnarray*}
  &~& \frac{1}{|\Omega_2|}a^{N-2}b^{N-2}[-(a+b)\dt {\h{f}} {g}
 + \h{f}(\dt h+ a \dt g) - \dt f(\h h- b \h{g})]\\
  &=& -|\h{N-2},N|_{_{[3]}} |\h{N-3},\dt{\psi}(N-2),
\c E^{2}{\psi}(N-2)|_{_{[3]}}\\
  &~& -|\h{N-2},\c E^{2}{\psi}(N-2)|_{_{[3]}} |\h{N-3},N,
\dt{\psi}(N-2)|_{_{[3]}}\\
  &~& +|\h{N-2},\dt{\psi}(N-2)|_{_{[3]}} |\h{N-3},N,
\c E^{2}{\psi}(N-2)|_{_{[3]}} \\
  &=& 0,
\end{eqnarray*}
where we have made use of Lemma \ref{L:lap} in which
$\mathbf{B}=(\h{N-3})$, $\mathbf{a}=\psi(N-2)$, $\mathbf{b}=\psi(N)$,
$\mathbf{c}=\dt{\psi}(N-2)$ and $\mathbf{d}=\c E^{2}{\psi}(N-2)$

\eqref{eq:bil-H2-4} can be proved similarly after a down-tilde-hat-shift.

Next we prove \eqref{eq:bil-H2-5}, which can be written as
\begin{equation}
  f (\h h-b\h g)-g(\h g - b \h f) + \h f s =0,
\end{equation}
where $f=|\h{N-1}|_{_{[3]}},~
g=|\h{N-2},N|_{_{[3]}},~s=|\h{N-3},N-1,N|_{_{[3]}}$, and $\h h-b\h g$,
$\h g - b \h f$ and $\h f$ will be provided by \eqref{I-h-b} with
$\mu=2$, \eqref{I-g-b} with $\mu=2$ and \eqref{I-c} with $\mu=2$,
respectively.  Then we have
\begin{eqnarray*}
  &~& \frac{1}{|\Omega_2|}b^{N-2}[f (\h h-b\h g)-g(\h g - b \h f) + \h f s]\\
  &=& |\h{N-1}|_{_{[3]}} |\h{N-3},N,\c E^{2}{\psi}(N-2)|_{_{[3]}}\\
  &~& -|\h{N-2},N|_{_{[3]}} |\h{N-3},N-1,\c E^{2}{\psi}(N-2)|_{_{[3]}}\\
  &~& +|\h{N-2},\c E^{2}{\psi}(N-2)|_{_{[3]}} |\h{N-3},N-1,N|_{_{[3]}} \\
  &=& 0,
\end{eqnarray*}
where use has been made of Lemma \ref{L:lap} with
$\mathbf{B}=(\h{N-3})$, $\mathbf{a}=\psi(N-2)$,
$\mathbf{b}=\psi(N-1)$, $\mathbf{c}=\psi(N)$ and $\mathbf{d}=\c
E^{2}{\psi}(N-2)$.
\end{proof}

\section{H3}
\subsection{Background solution}
H$3^{\delta}$ is given by
\begin{equation}
\mathrm{H}3^\delta \equiv p(u\t{u}+\h{u}\th{u})-q(u\h{u}+\t{u}\th{u})
-\delta(q^2-p^2) =0.
\label{H3d}
\end{equation}
The side equations for $T(x)=x$ then read
\begin{equation}
  \label{H3-sides}
  r(u^2+\t u^2)-2pu\t u=\delta(p^2-r^2),\quad
  r(u^2+\h u^2)-2qu\h u=\delta(q^2-r^2).
\end{equation}

In this case we reparameterize
\begin{equation}
  \label{H3-reparx}
  p=r\cosh(\alpha'),\quad q=r\cosh(\beta'),\quad
  u_{nm}=Ae^{y_{nm}}+Be^{-y_{nm}},\quad AB=-\tfrac14 r \delta
\end{equation}
and then the equations \eqref{H3-sides} factorize as
\begin{subequations}
  \label{H3-exp-fact}
\begin{eqnarray}
  (e^{\t y-y+\alpha'}-1)  (e^{\t y-y-\alpha'}-1)
  (e^{\t y+y+\alpha'-\ln \frac{B}{A}}-1)
(e^{\t y+y-\alpha'-\ln \frac{B}{A}}-1)&=&0,\\
  (e^{\h y-y+\beta'}-1)  (e^{\h y-y-\beta'}-1)
  (e^{\h y+y+\beta'-\ln \frac{B}{A}}-1)
(e^{\h y+y-\beta'-\ln \frac{B}{A}}-1)&=&0.
\end{eqnarray}
\end{subequations}
Since we only consider real $u$ the various possibilities can be
represented as in \eqref{A2-2signs}.  The analysis is then the same,
especially since also here the sign of $y$ is undetermined in
\eqref{H3-reparx}. Thus the solution for $y$ is as in
\eqref{A-lin2-0ss} and for $u$ we have
\begin{eqnarray}
  u^{0SS}&=&Ae^{y_{nm}}+Be^{-y_{nm}}=A e^{\alpha' n+\beta' m+\gamma}+Be^{-\alpha'
    n-\beta' m-\gamma}\nonumber\\
  &=&A\alpha^n \beta^m+B\alpha^{-n}\beta^{-m},\quad AB=-\tfrac14 r
  \delta,
\label{u0ss-H3d}
\end{eqnarray}
where  $\alpha=e^{\alpha'}, \beta=e^{\beta'}$.

In the case of $T(x)=-x$, one can find that the side equations are
just as for $T(x)=x$ except that $r\to -r$. Since $r$ is a free
parameter this adds nothing new.

\subsection{1-soliton solution for H$3^{\delta}$}
Now we take \eqref{u0ss-H3d} as the seed solution to construct
$u_{n,m}^{1SS}$ for H$3^{\delta}$ through its B\"acklund
transformation
\begin{subequations}
\label{BT-H3}
\begin{equation}
p(u\t{u}+\b{u}\tb{u})-\varkappa(u\b{u}+\t{u}\tb{u})=\delta(\varkappa^2-p^2),
\label{BT-H3-a}
\end{equation}
\begin{equation}
\varkappa(u\b{u}+\h{u}\bh{u})-q(u\h{u}+\b{u}\bh{u})=\delta(q^2-\varkappa^2),
\label{BT-H3-b}
\end{equation}
\end{subequations}
where $u$ is the seed solution \eqref{u0ss-H3d}, and we search for the 1SS
$\b{u}$ of the form
\begin{equation}
\b{u}=\b{u}_{0}+v,
\label{u-bar-H3}
\end{equation}
where $\b{u}_{0}$ is the bar-shifted background solution \eqref{u0ss-H3d}:
\begin{equation}
\b{u}_{\theta}=A \alpha^n \beta^m \kappa +B \alpha^{-n}\beta^{-m}\kappa^{-1},
\label{u-b-theta-H3}
\end{equation}
with $\varkappa$ and $\kappa$ related by
\begin{equation}
\varkappa=r\frac{1+\kappa^2}{2\kappa}.
\label{s-k-H3}
\end{equation}

Following again the procedure in Section \ref{S:1ss} we get
$\mathcal{N,M}$ in the form \eqref{NM-matricesST} with
\begin{eqnarray*}
S=r\frac{1-\alpha^2 \kappa^2}{2\alpha \kappa},\quad
\Delta=r\frac{-\alpha^2+\kappa^2}{2\alpha \kappa},
\quad \sigma=p,\\
\quad
T=r\frac{1-\beta^2\kappa^2}{2\beta \kappa},\quad
\Omega=r\frac{-\beta^2+\kappa^2}{2\beta \kappa},\quad \tau=q,\\
U_{n,m}=A \alpha^n \beta^m -B \alpha^{-n}\beta^{-m}.
\end{eqnarray*}
Then defining $\rho$ by
\begin{equation}
\rho_{n,m}=\biggl(\frac{S}{\Delta}\biggr)^n
\biggl(\frac{T}{\Omega}\biggr)^m \rho_{0,0}
=\biggl(\frac{\alpha^2\kappa^2-1}{\alpha^2 -\kappa^2}\biggr)^n
\biggl(\frac{\beta^2\kappa^2-1}{\beta^2-\kappa^2}\biggr)^m \rho_{0,0},
\label{rho-H3}
\end{equation}
we find
\[
v_{n,m}=\frac{\frac{U_{n,m}}{U_{0,0}}v_{0,0}\rho_{n,m}/\rho_{0,0} }
{1-\frac{\kappa}{1-\kappa^2}\cdot\frac{v_{0,0}}{U_{0,0}}
  +\frac{\kappa}{1-\kappa^2}\cdot\frac{v_{0,0}}{U_{0,0}}\rho_{n,m}/\rho_{0,0}}
=\frac{\frac{1-\kappa^2}{\kappa} U_{n,m} \rho_{n,m}} {1+\rho_{n,m}},
\]
and finally
\begin{equation}\label{1ss-H3}
u_{n,m}^{1SS}  = \frac{A \alpha^n \beta^m(1+\kappa^{-2}\rho_{n,m})
+B \alpha^{-n}\beta^{-m}(1+ \kappa^2\rho_{n,m})}{1+\rho_{n,m}}.
\end{equation}

\subsection{Bilinear form and Casoratian solutions}
\subsubsection{$N$-soliton solution}
Noting that $\rho_{n,m}$ given by \eqref{rho-H3} is in a ``twisted''
form in comparison with \eqref{H1-1ss} and \eqref{H2-1ss}, we
first introduce the M\"obius transformations for the parameters
\begin{equation}
\alpha^2= -\frac{a-c}{a+c}, \,
\beta^2= -\frac{b-c}{b+c}, \,
\kappa^2= -\frac{k-c}{k+c},\quad
\Rightarrow\quad p^2=\frac{r^2c^2}{c^2-a^2},
\, q^2=\frac{r^2c^2}{c^2-b^2},
\label{Mob-trans}
\end{equation}
which also contain a new auxiliary parameter $c$.  This brings
$\rho_{n,m}$ of \eqref{rho-H3} into the canonical form
\begin{equation}
\rho_{n,m}=\biggl(\frac{a+k}{a-k}\biggr)^n
\biggl(\frac{b+k}{b-k}\biggr)^m \rho_{0,0}.
\label{rho-H3M}
\end{equation}
In terms of the new form of $\rho_{n,m}$ we can write the 1SS
\eqref{1ss-H3} as
\begin{equation}
u_{n,m}^{1SS} = A \alpha^n \beta^m \frac{\psi(n,m,l+1)}{\psi(n,m,l)}
+B \alpha^{-n}\beta^{-m}\frac{\psi(n,m,l-1)}{\psi(n,m,l)},
\quad AB=-\tfrac14r\delta,
\label{1ss-H3-m}
\end{equation}
where $\psi$ is as in  \eqref{psi-gen}.

On the basis of the above 1SS, we propose that the NSS of H3$^\delta$
can be given by
\begin{equation}
  u_{n,m}^{NSS}  = A \alpha^n \beta^m\frac{\b f}{f}+
  B \alpha^{-n}\beta^{-m}\frac{\db f}{f},\quad AB=-\tfrac14r\delta,
\label{trans-H3}
\end{equation}
where $f=|\h{N-1}|_{_{[\nu]}}$ with entries \eqref{psi-gen}.  We may
consider \eqref{trans-H3} as a dependent variable transformation for
\eqref{H3d}.

The solution \eqref{trans-H3} is the same as (I-5.21) with
$B=C=0$.  In fact, let $\frac 1r$ in \eqref{trans-H3} equal to $a$
which is the direction parameter for the bar-shift in Part-I.  Then
comparing \eqref{Mob-trans} with (I-5.1c), using (I-2.32,5.19)
and substituting $\rho_i$ in (I-2.2) by
\begin{equation}
\rho_i=\biggl(\frac{p+k_i}{p-k_i}\biggr)^n
\biggl(\frac{q+k_i}{q-k_i}\biggr)^m \biggl(\frac{a+k_i}{a-k_i}\biggr)^l \rho^{0}_i,
\label{rho-I}
\end{equation}
one finds
\[
\vartheta=\alpha^{-n}\beta^{-m},\quad
V(a)\,/\,{\textstyle \prod_j(a-k_j)}=\frac{\db f}{f},\quad
V(-a)\times{\textstyle \prod_j(a-k_j)}=\frac{\b f}{f},
\]
with the same parameter identification as in \eqref{par-comp}.

\subsubsection{Bilinearization-I}
After introducing the two bilinear equations
\begin{subequations}
\label{eq:bil-HM}
\begin{eqnarray}
\mathcal{B}_1 &\equiv & 2c f \t f +(a-c) \tb f \db f -(a+c)\b f \db{\t f} =0,
\label{eq:bil-HM-a}\\
\mathcal{B}_2 &\equiv & 2c f \h f +(b-c) \hb f \db f -(b+c)\b f \db{\h f} =0,
\label{eq:bil-HM-b}
\end{eqnarray}
\end{subequations}
we can represent H3$^\delta$ \eqref{H3d} as
\begin{equation*}
\mathrm{H}3^\delta \equiv \frac{-\alpha^{4n+2}\beta^{4m+2}(a+c)(b+c)\delta^2 P_1
+4\alpha^{2n}\beta^{2m}\delta B^2P_2+ 16 (a+c)(b+c)B^4 P_3}
{32 \alpha^{2n+2}\beta^{2m+2}(a+c)^2(b+c)^2 B^2 f\t f\h f \th f},
\end{equation*}
where
\begin{equation*}
\begin{array}{rl}
  P_1=& \th {\b f}\Bigl[(b-c)\bh f \mathcal{B}_1-(a-c)\bt f
{\mathcal{B}}_2 \Bigr]
        -\b f\Bigl[(b+c)\tb f \h {\mathcal{B}}_1-(a+c)\bh f
\t {\mathcal{B}}_2\Bigr],\\
  P_2=& 2c\Bigl[(b+c)(b-c)(\h f \th f \mathcal{B}_1+f \t f
\h {\mathcal{B}}_1)
        -(a+c)(a-c)(\t f\th f {\mathcal{B}}_2+f\h f
\t {\mathcal{B}}_2) \Bigr],\\
  P_3=& \th {\db f}\Bigl[(b+c)\db{\h f} \mathcal{B}_1-(a+c)
\db{\t f} {\mathcal{B}}_2\Bigr]
        -\db f\Bigl[(b-c)\db{\t f} \h {\mathcal{B}}_1-(a-c)
\db{\h f} \t {\mathcal{B}}_2\Bigr].
\end{array}
\end{equation*}
Thus \eqref{eq:bil-HM} can be considered as a bilinearization of
H3$^\delta$, and the final step in constructing the NSS is
\begin{prop}
\label{P:H3-Nss-1}
  The Casoratian
  \begin{equation}
  f=|\h{N-1}|_{_{[\nu]}},\quad  (\nu=1,2~{\rm or}~3),
  \end{equation}
  with entries given by $\psi$ of \eqref{psi-gen}, solves the bilinear
  H3$^{\delta}$ \eqref{eq:bil-HM}.
\end{prop}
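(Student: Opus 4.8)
The two equations \eqref{eq:bil-HM-a} and \eqref{eq:bil-HM-b} are exchanged by the substitution $a\leftrightarrow b$, $E^{1}\leftrightarrow E^{2}$ (equivalently $n\leftrightarrow m$), under which both the entries \eqref{psi-gen} and the Casoratian $f=|\h{N-1}|$ are invariant in form. Hence it suffices to establish $\mathcal{B}_1$, and $\mathcal{B}_2$ then follows verbatim. Exactly as in the H1 and H2 proofs, I would first replace $\mathcal{B}_1$ by a shifted copy (applying one overall bar-shift throughout) so that the down-bar entries $\db f,\ \db{\t f}$ become ordinary columns and the smallest column index appearing is $0$; this only multiplies the identity by a nonzero factor and keeps every determinant in the standard range.

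The engine of the argument is the conversion of all shifts into bar-shifts. Taking $\mu=1,\nu=3$ in $(\alpha_\mu-\alpha_\nu)\psi=(E^\mu-E^\nu)\psi$, with $\alpha_1=a,\alpha_3=c$ from \eqref{alpha-ij}, gives $\t\psi=\b\psi+(a-c)\psi$: a tilde-shift of any column equals one bar-shift of that column plus an $(a-c)$-multiple of itself. Applying this column-by-column (and its down-shifted form for $\db{\t f}$) lets me rewrite each of the six factors $f,\ \t f,\ \b f,\ \db f,\ \tb f,\ \db{\t f}$ occurring in $\mathcal{B}_1$ as a single $N\times N$ Casoratian in the $[3]$-representation. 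Concretely I would invoke the shift formulas of Appendix \ref{A:a} (with $\kappa=3$, and with $c$ kept generic rather than set to $0$) to express each factor as a determinant whose first $N-2$ columns are the common block $\h{N-3}$ and whose last two columns are explicit shifted vectors $\dt{\psi}(N-2),\ \dh{\psi}(N-2),\ \c E^{\nu}{\psi}(N-2),\dots$ of the same type already used for H1 and H2. The powers of $a$ and the factors $|\Omega_\mu|$ these formulas produce must be tracked, and they combine with the coefficients $2c,\ a-c,\ a+c$ into a single common prefactor.

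Once all six factors are brought to the form $|\h{N-3},\mathbf{x},\mathbf{y}|$ with the same leading block $\mathbf{B}=(\h{N-3})$, the left-hand side of $\mathcal{B}_1$ collapses to an alternating three-term sum
\[
|\mathbf{B},\mathbf{a},\mathbf{b}||\mathbf{B},\mathbf{c},\mathbf{d}|
-|\mathbf{B},\mathbf{a},\mathbf{c}||\mathbf{B},\mathbf{b},\mathbf{d}|
+|\mathbf{B},\mathbf{a},\mathbf{d}||\mathbf{B},\mathbf{b},\mathbf{c}|,
\]
for suitable column vectors $\mathbf{a},\mathbf{b},\mathbf{c},\mathbf{d}$ selected from $\psi(N-2),\ \psi(N-1),\ \dt{\psi}(N-2),\ \c E^{\nu}{\psi}(N-2)$ and their bar-partners. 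By the Plücker/Laplace identity of Lemma \ref{L:lap} (equation \eqref{laplace}) this sum vanishes identically, which completes the proof; Lemma \ref{L:iden} may be needed en route to collect the $\sum k_i^2$-type contributions, just as in the H2 computation.

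I expect the real work to lie entirely in the middle step: choosing the right shifted form of $\mathcal{B}_1$ and matching each of the six shifted Casoratians to the correct Appendix \ref{A:a} formula so that (i) every term acquires the identical $(N-2)$-block $\h{N-3}$, and (ii) the three surviving products line up with the precise signs and column orderings demanded by \eqref{laplace}. The coefficients $2c,\ a-c,\ a+c$ are not incidental: they are exactly what the reduction $\t\psi=\b\psi+(a-c)\psi$ forces, and confirming that they assemble into one common prefactor, leaving nothing but the Plücker combination, is the principal bookkeeping hurdle. No idea beyond Lemmas \ref{L:iden} and \ref{L:lap} should be required.
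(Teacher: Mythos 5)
Your overall template (down-shift the equation, push every shift onto a single column via the Appendix formulas, finish with Lemma \ref{L:lap}) is the right one, but the specific plan fails at its central step: you choose to ``convert all shifts into bar-shifts'' and express all six factors of $\mathcal B_1$ as Casoratians in the $[3]$-representation with common block $\h{N-3}$. That cannot work, because in the $[3]$-representation the bar direction \emph{is} the column direction: $\b f=|\t{N}|_{_{[3]}}$ and $\db f=|-1,\h{N-2}|_{_{[3]}}$ are window-translates of $f=|\h{N-1}|_{_{[3]}}$, not determinants with the block $\h{N-3}$ and two distinguished columns. Since $\mathcal B_1$ contains the three windows $f,\b f,\db f$ \emph{and} tilde-shifted factors, writing everything over any single $(N-2)$-column block forces more than four distinct ``extra'' columns, and the identity \eqref{laplace} — which handles exactly four — cannot close the sum in one application. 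The analogy with H1 and H2 is misleading here: there the bilinear equations involve only tilde and hat shifts, so the $[3]$ (bar) direction is transverse to the equation and the reduction goes through; for $\mathcal B_1$, which involves tilde and bar shifts, the transverse direction is the hat direction.

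That is exactly what the paper does: it proves the down-tilde-shifted version $2c\,\dt f f+(a-c)\,\b f\,\dtb f-(a+c)\,\dt{\b f}\,\db f=0$ using Casoratians w.r.t.\ the \emph{hat} shift ($\kappa\equiv2$ in \eqref{Formula-I}), so that both the tilde and the bar shifts act transversally and each factor becomes $|\h{N-3},\cdot,\cdot|_{_{[2]}}$ with the four columns $\psi(N-2)$, $\db{\psi}(N-2)$, $\dt{\psi}(N-2)$, $\c E^{3}\psi(N-2)$; a single application of Lemma \ref{L:lap} then finishes it (and $\mathcal B_2$ goes the same way with $\kappa\equiv1$). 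The missing idea in your proposal is precisely this choice of the Casoratian direction absent from the equation; it is the one nontrivial decision in the proof, and your explicit choice of $\nu=3$ is the one that does not work. (Minor additional point: Lemma \ref{L:iden} is not needed anywhere in this proposition; it only enters the H2 analysis through the relation $h-s=\gamma f$.)
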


\begin{proof}
We prove \eqref{eq:bil-HM-a} in its down-tilde-shifted version
\begin{eqnarray}
2c \dt f f +(a-c) \b f \dtb f -(a+c)\dt{\b f} \db f =0.
\label{eq:bil-H3-1Mad}
\end{eqnarray}
For this equation we use Casoratians w.r.t. hat shift, i.e.,
$\kappa\equiv 2$ in \eqref{Formula-I}.  $\dt f$, $f$, $\b f$, $\dtb
f$, $\dt{\b f}$ and $\db f$ are given by the formulas \eqref{I-a} with
$\mu=1$, \eqref{I-l} with $\mu=\nu=3$, \eqref{I-c} with $\mu=3$,
\eqref{I-k} with $\mu=1$ and $\nu=3$, \eqref{I-l} with $\mu=3$ and
$\nu=1$, and \eqref{I-a} with $\mu=3$, respectively.  Then we have
\begin{eqnarray*}
  &~& \frac{1}{|\Gamma_3|}(a-b)^{N-2}(c+b)^{N-2}(c-b)^{N-2}
[2c \dt f f +(a-c) \b f \dtb f -(a+c)\dt{\b f} \db f]\\
  &=& -|\h{N-2},\dt{\psi}(N-2)|_{_{[2]}} |\h{N-3},\db{\psi}(N-2),
\c E^{3} \psi(N-2)|_{_{[2]}}\\
  &~& +|\h{N-2},\c E^{3} \psi(N-2)|_{_{[2]}} |\h{N-3},\db{\psi}(N-2),
\dt{\psi}(N-2)|_{_{[2]}}\\
  &~& +|\h{N-2},\db{\psi}(N-2)|_{_{[2]}} |\h{N-3},\dt{\psi}(N-2),
\c E^{3} \psi(N-2)|_{_{[2]}}\\
  &=& 0,
\end{eqnarray*}
where we have made use of Lemma \ref{L:lap} in which $\mathbf{B}=(\h{N-3})$,
and $\mathbf{a}=\psi(N-2)$, $\mathbf{b}=\db{\psi}(N-2)$,
$\mathbf{c}=\dt{\psi}(N-2)$ and $\mathbf{d}=\c E^{3} \psi(N-2)$.

The other bilinear equation \eqref{eq:bil-HM-b} can be proved in its
down-hat-shifted version in a similar way by taking $\kappa\equiv 1$.
\end{proof}

\subsubsection{Bilinearization-II}
In fact there is another bilinearization of H3$^{\delta}$ using
\eqref{trans-H3}. Consider the bilinear system
\begin{subequations}
\label{eq:bil-H3-2M}
\begin{eqnarray}
  {\mathcal{B}}_1^{\prime} & \equiv  & (b+c)\th f \b f
 +(a-c)f \th {\b f} -(a+b)\t f\bh f =0,
 \label{eq:bil-H3-2Ma}\\
 {\mathcal{B}}_2^{\prime}& \equiv  & (c-b)\th f \db f
-(a+c)f \th {\db f} +(a+b)\t f\db{\h f} =0,
 \label{eq:bil-H3-2Mb}\\
 {\mathcal{B}}_3^{\prime}& \equiv  & (c-a)(b+c) \tb f
\db{\h f} +(a+c)(b-c) \hb f \db{\t f} +2c(a-b) f \th f =0.
\label{eq:bil-H3-2Mc}
\end{eqnarray}
\end{subequations}
This system is related to H3$^{\delta}$ through
\begin{equation*}
\begin{array}{rl}
  \mathrm{H}3^\delta \equiv & \frac{c}{f\t f\h f \,\th f}
  \biggl [A^2\alpha^{2n}\beta^{2m}\frac{\h f\,\tb f
{\mathcal{B}}_1^{\prime}-\t f\,\bh f {\mathcal{B}}_2^{\prime}}{(a+c)(b+c)}
  +B^2\alpha^{-2n}\beta^{-2m}\frac{\db{\h f}\,\t f
    \db{\mathcal{B}}_1^{\prime}-\db{\t f}\,\h f
\db{\mathcal{B}}_2^{\prime}}{(a-c)(b-c)}\\
  & \quad \quad ~~+AB \Bigl(\frac{\t f\db{\h f} {\mathcal{B}}_2^{\prime}
    +\h f {\tb f} \db {\mathcal{B}}_2^{\prime}}{(a+c)(b-c)}
  -\frac{\h f\db{\t f} {\mathcal{B}}_1^{\prime}
    +\t f {\bh f} \db {\mathcal{B}}_1^{\prime}}{(a-c)(b+c)}
  -\frac{2(a+b)\t f\h f {\mathcal{B}}_3^{\prime}}{(a^2-c^2)(b^2-c^2)}
\Bigr)\biggr].
\end{array}
\end{equation*}

The bilinear system \eqref{eq:bil-H3-2M} shares the same Casoratian
solutions \eqref{eq:bil-HM}:
\begin{prop}
\label{P:H3-Nss-2}
  The Casoratian
  \begin{equation}
  f=|\h{N-1}|_{_{[\nu]}}, \quad  (\nu=1,2~{\rm or}~3),
  \end{equation}
  with entries given by $\psi$ of \eqref{psi-gen}, solves the bilinear
  H3$^{\delta}$ \eqref{eq:bil-H3-2M}.
\end{prop}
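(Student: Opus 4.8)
The plan is to establish the three bilinear equations \eqref{eq:bil-H3-2Ma}--\eqref{eq:bil-H3-2Mc} by the same Casoratian--shift technique already used for Propositions \ref{P:H1-1}, \ref{P:H2-Nss} and \ref{P:H3-Nss-1}. For each of $\mathcal{B}_1'$, $\mathcal{B}_2'$ and $\mathcal{B}_3'$ I would first pass to a convenient down-shifted version so that all factors carry only non-negative column shifts, then rewrite every factor as a single Casoratian with one replaced column by means of the Formula-I relations \eqref{I-a}, \eqref{I-c}, \eqref{I-k} and \eqref{I-l} of Appendix \ref{A:a}, building all the determinants from one fixed shift direction $\kappa$---legitimate because $f=|\h{N-1}|_{_{[\nu]}}$ is independent of $\nu$. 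After clearing a common scalar prefactor, the aim in each case is to reach a three-term alternating sum of $N\times N$ Casoratians sharing the $(N-2)$-column block $\mathbf{B}=(\h{N-3})$; this is exactly the Laplace expansion of Lemma \ref{L:lap}, which forces the sum to vanish.

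For $\mathcal{B}_1'$ the six factors $\th f,\b f,f,\th{\b f},\t f,\bh f$ should, after extracting a prefactor of the type $(a-b)^{N-2}(c\pm b)^{N-2}$, each collapse to a Casoratian whose final column is one of $\psi(N-1)$, $\dt\psi(N-2)$, $\db\psi(N-2)$ or $\c E^{3}\psi(N-2)$, and identifying $\mathbf{a}=\psi(N-2)$, $\mathbf{b}=\psi(N-1)$ with the remaining two vectors as $\mathbf{c},\mathbf{d}$ in \eqref{laplace} then yields $\mathcal{B}_1'=0$. The equation $\mathcal{B}_2'$ need not be treated from scratch: as formal bilinear expressions one checks that $\mathcal{B}_2'=-\mathcal{B}_1'$ under the simultaneous replacement $c\to-c$ and reversal of every bar-shift ($E^3\to E_3$). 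Since the entry \eqref{psi-gen} and the entire construction are defined for arbitrary $c$, $\mathcal{B}_2'$ should follow by rerunning the $\mathcal{B}_1'$ computation with these substitutions, exactly in the spirit of the ``proved similarly'' reductions used for the parallel equations above.

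The genuine obstacle is $\mathcal{B}_3'$. Its two leading products $\tb f\,\db{\h f}$ and $\hb f\,\db{\t f}$ each involve all three lattice directions and, crucially, carry bar-shifts of opposite sign, so the two factors of a single product do not reduce against a common column block in a symmetric manner. I would down-shift in the bar direction---turning each $\db(\cdot)$ into an unshifted factor---and then apply the double-shift Formula-I relations to $\tb f$, $\hb f$ and to the tilde/hat-shifted bar factors, with the goal of again producing three products over the shared block $\mathbf{B}=(\h{N-3})$. The delicate points I anticipate are (i) verifying that the double shifts generate column replacements compatible with one and the same block, and (ii) checking that the coefficients $(c-a)(b+c)$, $(a+c)(b-c)$ and $2c(a-b)$ come out with precisely the alternating signs demanded by \eqref{laplace}; getting both the prefactors and the signs to align is where the real bookkeeping lies.

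Finally, as a consistency check rather than a proof, I note that \eqref{eq:bil-H3-2M} and the already-established system \eqref{eq:bil-HM} are two bilinearizations of the same equation H3$^\delta$ via \eqref{trans-H3}; agreement of the two representations on the common Casoratian is a useful sanity test, but the direct verification just outlined is the systematic route I would carry out.
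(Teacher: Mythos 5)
Your overall strategy (down-shift the equation, reduce each factor to a single Casoratian via the Appendix \ref{A:a} formulas, and recognize a three-term Laplace identity, Lemma \ref{L:lap}) is indeed the paper's route. But the concrete reduction data you propose for $\mathcal{B}'_1$ would not work. Every term of \eqref{eq:bil-H3-2Ma} contains exactly one factor carrying an up-bar shift ($\b f$, $\th{\b f}$ or $\bh f$); working with bar-direction Casoratians ($\kappa=3$), those factors reduce via \eqref{I-g}, \eqref{I-i}, \eqref{I-n} to determinants of the form $|\t{N-1},\cdot\,|$ or $|\t{N-2},\cdot,\cdot|$, which do \emph{not} contain the column $\psi(0)$. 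Hence the shared block cannot be $(\h{N-3})$ with all replaced columns at level $N-2$, as in the H1/H2 proofs you are copying; the correct block is $\mathbf{B}=(\t{N-2})$ and the four Laplace vectors are $\psi(0)$, $\psi(N-1)$, $\dt\psi(N-1)$, $\c E^{2}\psi(N-1)$ (the paper's choice, using \eqref{I-d}, \eqref{I-g}, \eqref{I-b}, \eqref{I-i}, \eqref{I-n} on the down-tilde-shifted equation). Your $\mathbf{a}=\psi(N-2)$, $\mathbf{b}=\psi(N-1)$ identification is the two-direction pattern and fails for these genuinely three-direction equations. Your symmetry reduction of $\mathcal{B}'_2$ to $\mathcal{B}'_1$ ($c\to-c$ plus reversing all bar shifts) is formally correct on the bilinear expressions, but you have not checked that this operation maps the Casoratian solution class to itself: reversing the bar shift acts on the entries as $l\to -l$, which replaces $(c\pm k_i)^l$ by $(c\pm k_i)^{-l}\propto(c\mp k_i)^l/|\Gamma_3|^l$, not by $(-c\pm k_i)^l$; this needs the auxiliary-function bookkeeping of Appendix B, not a bare parameter substitution.

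The more serious gap is $\mathcal{B}'_3$, which you correctly single out as the crux but leave as a plan with open ``delicate points.'' The missing idea is that one should \emph{not} keep the same Casoratian shift direction: the paper proves \eqref{eq:bil-H3-2Mc} by taking its down-hat-shifted version and switching to tilde-direction Casoratians ($\kappa=1$ in \eqref{Formula-I}), which is legitimate because $|\h{N-1}|_{_{[1]}}=|\h{N-1}|_{_{[3]}}$. With that choice the six factors reduce via \eqref{I-b}, \eqref{I-d}, \eqref{I-m}, \eqref{I-n} to determinants over the common block $(\t{N-2})$ with the four vectors $\psi(0)$, $\db\psi(N-1)$, $\dh\psi(N-1)$, $\c E^{3}\psi(N-1)$, and the coefficients $2c(a-b)$, $-(a-c)(b+c)$, $(b-c)(a+c)$ emerge with the alternating signs of \eqref{laplace}. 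Staying in the bar direction, as you propose, leaves the mixed products $\tb f\,\db{\h f}$ and $\hb f\,\db{\t f}$ irreducible to a common block, so your outline as written does not close.
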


\begin{proof}
  We only prove \eqref{eq:bil-H3-2Ma} and \eqref{eq:bil-H3-2Mc}.
  \eqref{eq:bil-H3-2Mb} is similar to \eqref{eq:bil-H3-2Ma}.  We prove
  \eqref{eq:bil-H3-2Ma} in its down-tilde-shifted version, i.e.,
\begin{equation}
(b+c)\h f \dt{\b f} +(a-c)\dt f \bh{f} - (a+b) f \dt{\bh f} =0.
\label{eq:bil-H3-2Mad}
\end{equation}
We need to use Casoratians w.r.t.~bar shift. So we now fix
$\kappa\equiv 3$ in \eqref{Formula-I}.  For $\h f$, $\dt{\b f}$, $\dt
f$, $\bh{f}$ and $\dt{\bh f}$, we use the formulas \eqref{I-d} with
$\mu=2$, \eqref{I-g} with $\mu=1$, \eqref{I-b} with $\mu=1$,
\eqref{I-i} with $\mu=2$ and \eqref{I-n} with $\mu=2$ and $\nu=1$,
respectively, and $f=|\h{N-1}|_{_{[3]}}$ .  Then we have
\begin{eqnarray*}
  &~& \frac{1}{|\Gamma_2|}(a-c)^{N-2}(b+c)^{N-2}[(b+c)\h f \dt{\b f}
 +(a-c)\dt f \bh{f} - (a+b) f \dt{\bh f}]\\
  &=& -|\h{N-2},\c E^{2}{\psi}(N-1)|_{_{[3]}} |\t{N-1},\dt{\psi}(N-1)|_{_{[3]}}\\
  &~& +|\h{N-2},\dt{\psi}(N-1)|_{_{[3]}} |\t{N-1},\c E^{2} \psi(N-1)|_{_{[3]}}\\
  &~& -|\h{N-1}|_{_{[3]}}|\t{N-1},\dt{\psi}(N-1),\c E^{2} \psi(N-1)|_{_{[3]}} \\
  &=& 0,
\end{eqnarray*}
where we have made use of Lemma \ref{L:lap} in which
$\mathbf{B}=(\t{N-2})$, and $\mathbf{a}=\psi(0)$,
$\mathbf{b}=\psi(N-1)$, $\mathbf{c}=\dt{\psi}(N-1)$ and $\mathbf{d}=\c
E^{2} \psi(N-1)$.

For \eqref{eq:bil-H3-2Mc}, after a down-hat shift we get
\begin{equation}
2c (a-b)\dh f \t f -(a-c)(b+c)\dh{\bt f} \db{ f} +(b-c)(a+c)\b f \dhb{\t f} =0.
\label{eq:bil-H3-2Mcd}
\end{equation}
In this case we fix $\kappa\equiv 1$ in \eqref{Formula-I}.  For $\dh
f$, $\t f$, $\dh{\bt f}$, $\db{f}$, $\b f$ and $\dhb{\t f}$, we use
the formulas \eqref{I-b} with $\mu=2$, \eqref{I-n} with $\mu=\nu=3$,
\eqref{I-n} with $\mu=3$ and $\nu=2$, \eqref{I-b} with $\mu=3$,
\eqref{I-d} with $\mu=3$, and \eqref{I-m} with $\mu=2$ and $\nu=3$,
respectively.  Then
\begin{eqnarray*}
  &~& \frac{1}{|\Gamma_3|}(b-a)^{N-2}(c+a)^{N-2}(c-a)^{N-2}
      [2c (a-b)\dh f \t f -(a-c)(b+c)\dh{\bt f} \db{ f}
 +(b-c)(a+c)\b f \dhb{\t f}]\\
  &=& -|\h{N-2},\dh{\psi}(N-1)|_{_{[1]}} |\t{N-2},
\db{\psi}(N-1),\c E^{3}{\psi}(N-1)|_{_{[1]}}\\
  &~& +|\h{N-2},\db{\psi}(N-1)|_{_{[1]}} |\t{N-2},
\dh{\psi}(N-1),\c E^{3}{\psi}(N-1)|_{_{[1]}}\\
  &~& +|\h{N-2},\c E^{3}{\psi}(N-1)|_{_{[1]}}|\t{N-2},
\db{\psi}(N-1),\dh \psi(N-1)|_{_{[1]}}\\
  &=& 0,
\end{eqnarray*}
where in Lemma \ref{L:lap} we take this time $\mathbf{B}=(\t{N-2})$, and
$\mathbf{a}=\psi(0)$, $\mathbf{b}=\db{\psi}(N-1)$,
$\mathbf{c}=\dh{\psi}(N-1)$ and $\mathbf{d}=\c E^{3} \psi(N-1)$.
\end{proof}

\section{Q1 with linear background}
\subsection{Background solution  with $T(x)=x+c$}
Q$1^{\delta}$ is
\begin{equation}
{\rm Q}1^{\delta} \equiv p(u-\h{u})(\t{u}-\th{u})-q(u-\t{u})
(\h{u}-\th{u})-\delta^2pq(q-p)=0.
\label{Q1d}
\end{equation}
With the fixed point defined by $T(x)=x+c$ the side equations for the
background are
\begin{equation}
  \label{Q1-lin-side0}
r(u-\t u)^2=p(c^2+\delta^2r(p-r)),\quad
r(u-\h u)^2=q(c^2+\delta^2r(q-r)).
\end{equation}
After the reparameterization $(p,q)\to(a,b)$ with
\begin{equation}
  \label{Q1-lin-repar}
p=\frac{c^2/r-\delta^2 r}{a^2-\delta^2},\quad
q=\frac{c^2/r-\delta^2 r}{b^2-\delta^2},\quad \alpha:=pa,\quad \beta:=qb,
\end{equation}
equations \eqref{Q1-lin-side0} factorize as in \eqref{H1-0ss-eqs}, and
thus the 0SS will be as in \eqref{A-lin2-0ss}, where, however, we
should replace $a$ and $b$ with $\alpha$ and $\beta$, respectively.

\subsection{1-soliton solution}
The BT for constructing the 1SS is
\begin{subequations}
\label{BT-Q1}
\begin{eqnarray}
p(u-\b{u})(\widetilde{u}-\tb{u})-\varkappa(u-\t{u})(\b{u}-\tb{u})
&=&\delta^2 p\varkappa(\varkappa-p),
\label{BT-Q1-a}\\
\varkappa(u-\h{u})(\b{u}-\hb{u})-q(u-\b{u})(\h{u}- \hb{u})
&=&\delta^2 \varkappa q(q-\varkappa).
\label{BT-Q1-b}
\end{eqnarray}
\end{subequations}
Following the usual procedure we take $u=\alpha n+\beta m+\gamma$ as
the 0SS and
\begin{equation}
\b{u}=\alpha n+\beta m+\gamma+\kappa+v,
\label{ub-Q1-lin}
\end{equation}
as the 1SS, where $v$ is to be determined.  If we now choose
\begin{equation}
\varkappa=\frac{c^2/r-\delta^2 r}{k^2-\delta^2},\quad \kappa=k\varkappa,
\label{vka-Q1}
\end{equation}
then we get \eqref{NM-matricesEG} with
\[
E=-\varkappa(a+k),\quad F=-\varkappa(a-k),\quad
G=-\varkappa(b+k),\quad H=-\varkappa(b-k).
\]
Thus if we define
\[
\rho_{nm}=\left(\frac{a+k}{a-k}\right)^n
\left(\frac{b+k}{b-k}\right)^n\rho_{00}
\]
we find the 1SS in the form
\[
u=\alpha n+\beta m + \gamma+\kappa\frac{1-\rho_{nm}}{1+\rho_{nm}},
\]
where $p,q,\varkappa$ depend on $a,b,k$ as given in
\eqref{Q1-lin-repar},\eqref{vka-Q1}. This is similar to
\eqref{u1ss-H1} except for the more complicated dependence on the
parameters $a,b,k$.

\subsection{NSS}
After studying the 2SS using Hirota's perturbative method we propose
that the NSS is obtained using the following:
\begin{equation}
u_{n,m}^{NSS}  =\alpha n+\beta m + \gamma -(c^2/r-\delta^2 r) \frac{g}{f},
\label{trans-Q1-lin}
\end{equation}
where
\begin{equation}\label{Q1-lin-casdef}
  f=|\h{N-1}|_{_{[3]}},\,  g=|-1,\t{N-1}|_{_{[3]}},
\end{equation}
with $\psi$ defined by
\begin{equation}
\psi_i(n,m,l)= \varrho_{i}^{+}(a +k_i)^n(b +k_i )^m(\delta+ k_i)^l+
  \varrho_{i}^{-}(a -k_i)^n(b -k_i )^m(\delta-k_i)^l,
  \label{psi-Q1-lin}
\end{equation}

The solution \eqref{trans-Q1-lin} is consistent with (I-5.11): Taking
$c=0$ in \eqref{trans-Q1-lin} and $A=D=0, B=\delta/2$ in (I-5.11),
then using (I-2.34c,5.1b,5.4) one finds that
\[
a=\delta,\quad S(-a,a)={\textstyle \sum_j\frac1{k_j-a}}+\frac gf.
\]
However, the additional parameter $c$ \eqref{trans-Q1-lin} implies
more freedom, cf. \eqref{Q1-lin-repar} and (I-5.1b).

The functions $f,g$ satisfy the following bilinear equations (see
proposition below) among others
\begin{subequations}\label{Q1-lin-bileqs}
\begin{eqnarray}
\mathcal Q_1&\equiv&
\bar{\h{\t f}}  f  (b-\delta )+
\h{\t f} \bar{f}  (a+\delta )-\bar{\t{f}} \h{f} (a+b)=0,\label{Q1-lin-bileqs-1}\\
\mathcal Q_2&\equiv &\bar{\h{\t f}} f (a-b)
+\bar{\t{f}} \h{f} (b+\delta )-\t{f} \bar{\h{f}} (a+\delta )=0,\label{Q1-lin-bileqs-2}\\
\mathcal Q_3&\equiv&-\bar{\t{f}} \h{f}+\bar{\t{f}} \h{g} (-a+\delta )
+\t{f} \bar{\h{f}}+\bar{\h{f}} \t{g} (b-\delta)
+\bar{f} \h{\t g} (a-b)=0,\label{Q1-lin-bileqs-3}\\
\mathcal Q_4&=&\bar{\h{\t f}} g (a-b)+\bar{\t{f }} \h{g} (a+b)
-\bar{\h{f }}\t{g} (a+b)+\bar{f} \h{\t g} (-a+b)=0.\label{Q1-lin-bileqs-4}
\end{eqnarray}
\end{subequations}

We note that $\mathcal Q_4$ can also be replaced by
\begin{align}
  \mathcal Q'_4= \mathcal Q_3+\mathcal Q_4 = \hb f\t f -\hb f \t g
  (a+\delta) -\bt f \h f +\bt f \h g (\delta+b)+\th{\b f} g(a-b) =0,
\end{align}
which is similar to $\mathcal Q_3$.  When the dependent variable
transformation \eqref{trans-Q1-lin} is substituted into Q1$^\delta$ we
find that the result can be expressed in terms of the $\mathcal Q_i$
defined above:
\begin{equation}
\mathrm{Q}1^\delta=\frac{(c^2/r-\delta^2r)^3}
{(a^2-\delta^2)(b^2-\delta^2)(a-b)(a+\delta)\bar{f}
f \t f \h f \h{\t f}}\quad
\sum^{4}_{i=1}\mathcal{Q}_iP_i,
\end{equation}
with
\begin{subequations}
\begin{eqnarray}
P_1&=&(a-b) [\t{f} \h{f} g (-a+b)
+f (\h{f} \t{g}-\t{f} \h{g}) (a+b)\nonumber\\
&&\phantom{(a-b) [}
+\t{f} \h{g} g (-a^2+\delta ^2)
+\h{f} \t{g} g (b^2-\delta ^2)
+f \t{g} \h{g} (a^2-b^2)],\\
P_2&=&(a+b) [
\t{f} \h{f} g (a-b)
+f (\t{f} \h{g}-\h{f} \t{g}) (b-\delta )
+\t{g} (\h{f} g-f \h{g}) (a-b) (b-\delta )],~\\
P_3&=&(a+b) (a+\delta ) [
 \t{f} \h{f} g (a-b)
+\t{f} f \h{g} (b-\delta )
+\h{f} f \t{g} (-a+\delta )],\\
P_4&=&f (a+\delta ) [
\t{f} \h{f} (-a+b)
+(\t{f} \h{g}-\h{f} \t{g}) (a-\delta ) (b-\delta )].
\end{eqnarray}
\end{subequations}
Then it remains to prove the following:
\begin{prop}
\label{P:Q1-lin-Nss}
The Casoratian type determinants $f$, $g$, given in
\eqref{Q1-lin-casdef} with entries given by \eqref{psi-Q1-lin} solve
the set of bilinear equations \eqref{Q1-lin-bileqs}.
\end{prop}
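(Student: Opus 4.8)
The plan is to prove the four equations \eqref{Q1-lin-bileqs} exactly as Propositions \ref{P:H1-1}--\ref{P:H3-Nss-2} were proved: shift each equation down in a convenient lattice direction, regroup its terms, rewrite every (multiply-)shifted Casoratian through the shift-to-column formulas of Appendix \ref{A:a} --- now with $c\equiv\delta$, since for Q1 the third plane-wave parameter is $\alpha_3=c=\delta$ by \eqref{psi-Q1-lin} --- and recognise the resulting three-term alternating sum of $N\times N$ determinants as the Laplace expansion of Lemma \ref{L:lap}. The structural fact that makes everything work is that $f=|0,\t{N-1}|_{_{[3]}}$ and $g=|-1,\t{N-1}|_{_{[3]}}$ share the common $(N-1)$-column block $\t{N-1}=(1,\dots,N-1)$ and differ only in their first column (bar-shift $0$ versus $-1$). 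After trimming one further column, this shared block furnishes exactly the matrix $\mathbf B$ of Lemma \ref{L:lap} (of size $N\times(N-2)$, e.g. $\h{N-3}$ or $\t{N-2}$), while the distinguished columns and their shifts supply the four bordering vectors $\mathbf a,\mathbf b,\mathbf c,\mathbf d$.

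First I would dispose of $\mathcal Q_1$ and $\mathcal Q_2$, which involve only $f$ and its tilde/hat/bar shifts and are therefore pure three-direction determinant identities, directly analogous to $\mathcal B_1'$, $\mathcal B_2'$ of H3. Working with a suitable down-shift (to clear the top shift) and a reference column direction $\kappa$, and switching freely via $|\cdot|_{_{[1]}}=|\cdot|_{_{[2]}}=|\cdot|_{_{[3]}}$, I would express each of the six factors via the single-column formulas \eqref{I-a}--\eqref{I-d} and the double-shift formulas \eqref{I-k},\eqref{I-l},\eqref{I-n}, whereupon the three products collapse onto a common border $\mathbf B$ with bordering columns $\psi(N-1)$ together with the down- and circle-shifted $\dt{\psi}(N-1),\dh{\psi}(N-1),\db{\psi}(N-1)$, and Lemma \ref{L:lap} finishes them.

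The genuinely new work is in $\mathcal Q_3$ and $\mathcal Q_4$, which couple $f$ and $g$ (note $\mathcal Q_4$ is $\delta$-free, so it is a pure determinant identity between the partner Casoratians $f,g$). Here I would group each $g$-term with its neighbouring $f$-term into combinations of the form $\h g(\delta-a)-\h f$ and $\t g(b-\delta)+\t f$, and likewise for $\mathcal Q_4$, since these are precisely what the combined formulas of type \eqref{I-g-a},\eqref{I-g-b} (adapted to the present $g=|-1,\t{N-1}|_{_{[3]}}$) render as single column-modified Casoratians sharing the block $\t{N-1}$. Following the hint in the text I would not treat $\mathcal Q_4$ directly but replace it by $\mathcal Q_4'=\mathcal Q_3+\mathcal Q_4$, which has the same shape as $\mathcal Q_3$; proving $\mathcal Q_3$ and $\mathcal Q_4'$ then yields $\mathcal Q_4$. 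Each of these reduces, after the regrouping, to a single application of Lemma \ref{L:lap} with $\mathbf B$ the shared block and the three distinguished combinations as borders.

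The main obstacle is the bookkeeping of three simultaneous, independent shift directions: unlike the Wronskian case there is no single derivative direction, so each factor must be converted, via the trade $(\alpha_\mu-\alpha_\nu)\psi=(E^\mu-E^\nu)\psi$, into column operations in the chosen $\kappa$-direction, and for every one of the roughly six factors per equation I must select the correct formula and the correct pair $(\mu,\nu)$ so that all three products end on one $(N-2)$-column block and the coefficients $(a\pm\delta),(b\pm\delta),(a\pm b)$ match the Vandermonde-type normalisations (the analogues of the $|\Omega_2|,|\Gamma_3|$ and $a^{N-2}b^{N-2}$ prefactors of the H-series proofs). The hardest point is the grouping for $\mathcal Q_3$ and $\mathcal Q_4'$: the decomposition into $g+(\cdot)f$ combinations is not unique, and I expect the real effort to lie in finding the one grouping that lands exactly on a single Lemma \ref{L:lap} pattern rather than on a sum of several such identities.
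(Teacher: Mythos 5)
Your overall strategy --- down-shift each bilinear equation, convert every shifted Casoratian into a column-modified determinant with $c\equiv\delta$, and close with the Laplace-expansion Lemma \ref{L:lap} --- is exactly the paper's, and your decision to prove $\mathcal Q_3$ together with $\mathcal Q'_4=\mathcal Q_3+\mathcal Q_4$ rather than attacking $\mathcal Q_4$ directly also matches. One small economy you miss: $\mathcal Q_1$ needs no new work at all, since its down-bar-shifted version is literally $\mathcal B'_2$ of H3$^\delta$ (see \eqref{eq:bil-H3-2Mb}) with $c=\delta$, already proved in Proposition \ref{P:H3-Nss-2}; only $\mathcal Q_2$ is a genuinely new three-direction $f$-identity of the H3 type, and it does go through with a single application of Lemma \ref{L:lap} as you describe.

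The substantive problem is your claim that $\mathcal Q_3$ and $\mathcal Q'_4$ each ``reduce, after the regrouping, to a single application of Lemma \ref{L:lap}.'' They do not, and the search for a grouping that achieves this would fail. The reason is structural: because $g=|-1,\t{N-1}|_{_{[3]}}$ is bordered on the \emph{low}-index side, the analogues of \eqref{I-g-a}--\eqref{I-g-b} are not single determinants but the multi-term expansions of Appendix \ref{A:c}: for instance \eqref{II-b} writes $E_{\mu}E^3 f$ as a sum of three determinants and \eqref{II-g} writes $\dth{\b f}$ as a sum of six. Substituting these into the down-tilde-hat-shifted $\mathcal Q_3$ produces not one alternating three-term sum but the combination $(a-\delta)^2Y_1-(b-\delta)^2Y_2+(a-\delta)^2(b-\delta)^2Y_3$, where $Y_1,Y_2,Y_3$ of Lemma \ref{L:equl} are three \emph{separate} Laplace identities, each requiring its own choice of bordering vectors $(\mathbf a,\mathbf b,\mathbf c,\mathbf d)$ over the common block $(\t{N-2})$; likewise $\mathcal Q'_4$ decomposes into $Z_1,Z_2,Z_3$. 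Your plan is therefore repairable but incomplete at its hardest point: you must first derive the Appendix-\ref{A:c}-type expansion formulas appropriate to this $g$ (they cannot be obtained by merely ``adapting'' \eqref{I-g-a}--\eqref{I-g-b}), and then accept that the endgame is a fixed linear combination of three Laplace expansions rather than one.
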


\begin{lem}
\label{L:equl}
By means of Lemma \ref{L:lap}, the following formulas are zero, ($\mu=1,2$),
\begin{subequations}
\begin{align}
Y_{\mu}=&f|E_{\mu}{\psi}(-1),-1,\t{N-2}|_{_{[3]}}+\db{f}|E_{\mu}{\psi}(-1),\t{N-1}|_{_{[3]}}
-g|E_{\mu}{\psi}(-1),\h{N-2}|_{_{[3]}},
\label{Ymu}\\
Y_3=&|\dt{\psi}(-1),-1,\t{N-2}|_{_{[3]}}|\dh{\psi}(-1),\t{N-1}|_{_{[3]}}
-|\dh{\psi}(-1),-1,\t{N-2}|_{_{[3]}}|\dt{\psi}(-1),\t{N-1}|_{_{[3]}}\nonumber\\
&+g|\dh{\psi}(-1),\dt{\psi}(-1),\t{N-2}|_{_{[3]}};
\label{Y3}
\end{align}
\label{Y}
\end{subequations}
\begin{subequations}
\begin{align}
Z_{\mu}=&f|\c E^{\mu}{\psi}(-1),-1,\t{N-2}|_{_{[3]}}+\db{f}|\c E^{\mu}{\psi}(-1),\t{N-1}|_{_{[3]}}
-g|\c E^{\mu}{\psi}(-1),\h{N-2}|_{_{[3]}},
\label{Zmu}\\
Z_3=&|\c E^{1}{\psi}(-1),-1,\t{N-2}|_{_{[3]}}|\c E^{2}{\psi}(-1),\t{N-1}|_{_{[3]}}\nonumber\\
&-|\c E^{2}{\psi}(-1),-1,\t{N-2}|_{_{[3]}}|\c E^{1}{\psi}(-1),\t{N-1}|_{_{[3]}}\nonumber\\
&+g|\c E^{2}{\psi}(-1),\c E^{1}{\psi}(-1),\t{N-2}|_{_{[3]}}.
\label{Z3}
\end{align}
\label{Z}
\end{subequations}
\end{lem}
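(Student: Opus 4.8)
The plan is to recognize each of the six expressions $Y_1,Y_2,Y_3,Z_1,Z_2,Z_3$ as an instance of the Plücker/Laplace identity of Lemma \ref{L:lap}, by pulling out a common $(N-2)$-column block and matching the three terms to the three products $|\mathbf{B},\mathbf{a},\mathbf{b}||\mathbf{B},\mathbf{c},\mathbf{d}|$, $|\mathbf{B},\mathbf{a},\mathbf{c}||\mathbf{B},\mathbf{b},\mathbf{d}|$, $|\mathbf{B},\mathbf{a},\mathbf{d}||\mathbf{B},\mathbf{b},\mathbf{c}|$. The essential point is that Lemma \ref{L:lap} holds for \emph{arbitrary} trailing columns $\mathbf{a},\mathbf{b},\mathbf{c},\mathbf{d}$, so once the block $\mathbf{B}$ is fixed the vanishing is automatic; no estimate or new idea is needed.

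First I would treat $Y_\mu$ and $Z_\mu$ together. In every determinant occurring there the columns $\psi(1),\dots,\psi(N-2)$, i.e. $\t{N-2}$, are present, so I set $\mathbf{B}=(\t{N-2})$. Writing $f=|\h{N-1}|_{_{[3]}}$, $\db f$, $g=|-1,\t{N-1}|_{_{[3]}}$ and the three bracketed determinants with $\mathbf{B}$ moved to the front, the remaining two columns of each factor are drawn from the four vectors $\mathbf{a}=\psi(0)$, $\mathbf{b}=\psi(N-1)$, $\mathbf{c}=\psi(-1)$ and $\mathbf{d}=E_\mu\psi(-1)$ (for $Y_\mu$) or $\mathbf{d}=\c E^{\mu}\psi(-1)$ (for $Z_\mu$). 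Because the lemma is blind to which vector sits in the $\mathbf d$ slot, $Y_\mu$ and $Z_\mu$ follow from one and the same computation.

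The real work — and the only place errors arise — is the sign bookkeeping from the column permutations that bring $\mathbf{B}$ to the front and reorder the trailing pair. For example $f=(-1)^{N-2}|\mathbf{B},\mathbf{a},\mathbf{b}|$, while the bracketed determinant $|E_\mu\psi(-1),-1,\t{N-2}|_{_{[3]}}$ equals $|\mathbf{B},\mathbf{d},\mathbf{c}|=-|\mathbf{B},\mathbf{c},\mathbf{d}|$; analogous reductions for the $\db f$- and $g$-terms produce factors $(-1)^{N-2}|\mathbf{B},\mathbf{d},\mathbf{b}|$ and $(-1)^{N-2}|\mathbf{B},\mathbf{c},\mathbf{b}|$, $|\mathbf{B},\mathbf{d},\mathbf{a}|$. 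I expect all three products to emerge with a common factor $(-1)^{N-2}$ and relative signs $+,-,+$, so that after collecting them $Y_\mu$ (hence $Z_\mu$) is, up to a harmless global sign, exactly the left-hand side of Lemma \ref{L:lap} and therefore vanishes. Choosing the assignment $\mathbf{a}=\psi(0),\mathbf{b}=\psi(N-1),\mathbf{c}=\psi(-1),\mathbf{d}=E_\mu\psi(-1)$ so that each term lands on a \emph{distinct} product with the correct sign is the crux of the argument.

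For $Y_3$ and $Z_3$ the same scheme applies with $\mathbf{B}=(\t{N-2})$ and the four extra vectors now $\mathbf{a}=\psi(-1)$, $\mathbf{b}=\psi(N-1)$, $\mathbf{c}=\dt\psi(-1)$, $\mathbf{d}=\dh\psi(-1)$ (respectively $\mathbf{c}=\c E^{1}\psi(-1)$, $\mathbf{d}=\c E^{2}\psi(-1)$ for $Z_3$, recalling $\dt=E_1,\dh=E_2$). Moving $\mathbf{B}$ to the front in each of the five determinants and recording the permutation signs, I would check that the three terms collapse to $-\bigl[\,|\mathbf{B},\mathbf{a},\mathbf{b}||\mathbf{B},\mathbf{c},\mathbf{d}|-|\mathbf{B},\mathbf{a},\mathbf{c}||\mathbf{B},\mathbf{b},\mathbf{d}|+|\mathbf{B},\mathbf{a},\mathbf{d}||\mathbf{B},\mathbf{b},\mathbf{c}|\,\bigr]$, which is zero by Lemma \ref{L:lap}. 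Thus every $Y$ and $Z$ vanishes, and the only genuine obstacle throughout is the disciplined tracking of the permutation signs; the algebra is otherwise purely mechanical.
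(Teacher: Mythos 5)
Your proposal is correct and follows essentially the same route as the paper: the paper's proof likewise consists of taking $\mathbf{B}=(\t{N-2})$ and assigning the four remaining columns $\psi(-1),\psi(0),\psi(N-1)$ and $E_\mu\psi(-1)$ (resp.\ $\c E^{\mu}\psi(-1)$, resp.\ the pair $\dt\psi(-1),\dh\psi(-1)$ or $\c E^{1}\psi(-1),\c E^{2}\psi(-1)$) to the slots of Lemma \ref{L:lap}, your labelling being merely a permutation of theirs. The only difference is that you flag the permutation-sign bookkeeping explicitly, which the paper leaves implicit.
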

\begin{proof}
For the above formulas, we can  use Lemma \ref{L:lap} by respectively taking
\begin{align*}
\mathbf{B}=(\t{N-2}),~~&
(\mathbf{a}, \mathbf{b}, \mathbf{c}, \mathbf{d})
=(E_{\mu}{\psi}(-1),\psi(-1),\psi(0),\psi(N-1)),\\
\mathbf{B}=(\t{N-2}),~~&
(\mathbf{a}, \mathbf{b}, \mathbf{c}, \mathbf{d})
=(\dh{\psi}(-1),\dt \psi(-1),\psi(-1),\psi(N-1));\\
\mathbf{B}=(\t{N-2}),~~&
(\mathbf{a}, \mathbf{b}, \mathbf{c}, \mathbf{d})
=(\c E^{\mu}{\psi}(-1),\psi(-1),\psi(0),\psi(N-1)),\\
\mathbf{B}=(\t{N-2}),~~&
(\mathbf{a}, \mathbf{b}, \mathbf{c}, \mathbf{d})
=(\c E^2{\psi}(-1),\c E^1 \psi(-1),\psi(-1),\psi(N-1)).
\end{align*}
\end{proof}

\begin{proof}
\

\noindent\textit{Proof for \eqref{Q1-lin-bileqs-1}:}~~
The down-bar shifted \eqref{Q1-lin-bileqs-1} is nothing but
$\mathcal{B}'_2$ of H3$^\delta$ with $c=\delta$.

\noindent\textit{Proof for \eqref{Q1-lin-bileqs-2}:}~~
By a down-tilde shift \eqref{Q1-lin-bileqs-2}
is written as
\begin{equation}
(a-b)\dt f \hb f +(\delta+b)\b f \h{\dt f}-(a+\delta)\bh{\dt f}f =0.
\label{Q1-lin-bileqs-21}
\end{equation}
We fix $\kappa\equiv 2$ and $c=\delta$ in \eqref{Formula-I}.  For $\dt
f$, $\bh{f}$, $\b f$, $\h{\dt{f}}$ and $\bh{\dt f}$, we use the
formulas \eqref{I-b} with $\mu=1$, \eqref{I-i} with $\mu=3$,
\eqref{I-d} with $\mu=3$, \eqref{I-g} with $\mu=3$ and \eqref{I-n}
with $\mu=3$ and $\nu=1$, respectively, and $f=|\h{N-1}|_{_{[2]}}$.
Then we have
\begin{eqnarray*}
  &~& \frac{1}{|\Gamma_3|}(a-b)^{N-2}(\delta+b)^{N-2}
  [(a-b)\dt f \hb f +(\delta+b)\b f \h{\dt f}-(a+\delta)\bh{\dt f}f]\\
  &=& |\h{N-2},\dt{\psi}(N-1)|_{_{[2]}} |\t{N-1},\c E^{3}{\psi}(N-1)|_{_{[2]}}\\
  &~& -|\h{N-2},\c E^{3}{\psi}(N-1)|_{_{[2]}} |\t{N-1}, \dt \psi(N-1)|_{_{[2]}}\\
  &~& -|\h{N-1}|_{_{[2]}}|\t{N-1},\dt{\psi}(N-1),\c E^{3} \psi(N-1)|_{_{[2]}} \\
  &=& 0,
\end{eqnarray*}
where we have made use of Lemma \ref{L:lap} with
\begin{equation*}
\mathbf{B}=(\t{N-2}),~~
(\mathbf{a},\mathbf{b},\mathbf{c},\mathbf{d})=
(\psi(0),\psi(N-1), \dt{\psi}(N-1), \c E^{3} \psi(N-1)).
\end{equation*}

\noindent\textit{Proof for \eqref{Q1-lin-bileqs-3}:}~~
By a down-tilde-hat shift \eqref{Q1-lin-bileqs-3}
is written as
\begin{equation}
  -\dh{\b f}[\dt f +(a-\delta)\dt g]+ \dt{\b f}[\dh f +
(b-\delta)\dh g]+(a-b)\dth{\b f}g =0.
\label{Q1-lin-bileqs-31}
\end{equation}
This time we fix $\kappa\equiv 3$ and $c=\delta$ in
\eqref{Formula-II}.  For $\dh{\b f}$, $\dt f$, $\dt g$, $\dt{\b f}$,
$\dh f$, $\dh g$ and $\dth{\b f}$, we use the formulas \eqref{II-b}
with $\mu=2$, \eqref{II-a} with $\mu=1$, \eqref{II-c} with $\mu=1$,
\eqref{II-b} with $\mu=1$, \eqref{II-a} with $\mu=2$, \eqref{II-c}
with $\mu=2$, and \eqref{II-g}, respectively.  Then we have
\begin{eqnarray*}
  &~& -\dh{\b f}[\dt f +(a-\delta)\dt g]+ \dt{\b f}[\dh f
 +(b-\delta)\dh g]+(a-b)\dth{\b f}g \\
  &=& (a-\delta)^2 Y_1 -(b-\delta)^2 Y_2+(a-\delta)^2(b-\delta)^2 Y_3
\end{eqnarray*}
with $Y_j$ defined in \eqref{Y}, which are zero in the light of Lemma
\ref{L:equl}.

\noindent\textit{Proof for \eqref{Q1-lin-bileqs-4}:}~~
Using \eqref{Q1-lin-bileqs-3} to eliminate the term $(a-b)\b f \th g$
from \eqref{Q1-lin-bileqs-4}, we get
\begin{equation}
\bh f[\t f -(a+\delta)\t g]- \bt f[\h f -(b+\delta)\h g]+(a-b)\th{\b f}g =0.
\label{Q1-lin-bileqs-41}
\end{equation}
We fix $\kappa\equiv 3$ and $c=\delta$ in \eqref{Formula-II},
and use the formulas \eqref{II-d}, \eqref{II-e}, \eqref{II-f} and \eqref{II-h}.
Then it turns out that
\begin{eqnarray*}
  &~& \bh f[\t f -(a+\delta)\t g]- \bt f[\h f -(b+\delta)\h g]+(a-b)\th{\b f}g \\
  &=& (a+\delta)^2 Z_1 -(b+\delta)^2 Z_2 -(a+\delta)^2(b+\delta)^2 Z_3
\end{eqnarray*}
with $Z_j$ defined in \eqref{Z}, which are zero in the light of Lemma
\ref{L:equl}.

Thus we have completed the proof for all bilinear equations.

\end{proof}

\section{Q1 with power background}
\subsection{Background solution with $T(x)=-x+c$}
With fixed point defined by $T(x)=-x+c$ we use the reparameterization
\begin{align}
  \label{Q1m-repar}
p&=\tfrac12 r(1- \cosh(\alpha'))=-\tfrac14r(1-\alpha)^2/\alpha,\\
q&=\tfrac12 r(1- \cosh(\beta'))=-\tfrac14r(1-\beta)^2/\beta,\\
 x_{nm}&=A e^{y_{nm}}+B e^{y_{nm}}+c/2,\quad\text{ where} \quad AB=\delta^2r^2/16,
\end{align}
and this leads to equations that factorize as in \eqref{H3-exp-fact}.
Thus we get power type background solutions for $u$:
\begin{equation}
  \label{Q1-pow-0ss}
  u=\tfrac12c+A \alpha^n \beta^m +B \alpha^{-n}\beta^{-m},\quad
AB=\delta^2 r^2/16.
\end{equation}
Here $c$ is related to translation freedom and $r$ to scaling freedom;
in the following we take $c=0$.

\subsection{1-soliton solution for Q$1^{\delta}$}
In order to derive the 1SS we use the BT \eqref{BT-Q1}
with the seed solution \eqref{Q1-pow-0ss} with
\begin{equation}
\b{u}=\b{u}_{0}+v,
\label{u-bar-Q1}
\end{equation}
where $\b{u}_{0}$ is the bar-shifted background solution
\eqref{Q1-pow-0ss}:
\begin{equation}
\b{u}_{0}=A \alpha^n \beta^m \kappa +B
\alpha^{-n}\beta^{-m}\kappa^{-1},
\quad AB=\delta^2r^2/16.
\label{u-b-theta-Q1d}
\end{equation}
and $\kappa$ is defined through
\begin{equation}
\varkappa=-\tfrac r4 (1-\kappa)^2/\kappa.
\label{Q1-K}
\end{equation}

If we now define
\begin{equation}
U_{n,m}=A \alpha^n \beta^m -B \alpha^{-n}\beta^{-m}
\label{U-Q1-1}
\end{equation}
then it is straightforward to derive \eqref{NM-matricesST} from
\eqref{BT-Q1} with
\begin{equation}
S=\frac{r(1-\alpha)(1-\kappa)(1-\alpha \kappa)}{4\alpha \kappa},\quad
\Delta=\frac{r(1-\alpha)(1-\kappa)(\alpha -\kappa)}{4\alpha
  \kappa},\quad \sigma=p,
\end{equation}
\begin{equation}
T=\frac{r(1-\beta)(1-\kappa)(1-\beta \kappa)}{4\beta \kappa},\quad
\Omega=\frac{r(1-\beta)(1-\kappa)(\beta -\kappa)}{4\beta \kappa},\quad
\tau=q.
\end{equation}
On the basis of this we define $\rho$ as usual by
\begin{equation}
\rho_{n,m}=\biggl(\frac{S}{\Delta}\biggr)^n\biggl(
\frac{T}{\Omega}\biggr)^m \rho_{0,0}
=\biggl(\frac{1-\alpha \kappa}{\alpha-\kappa}\biggr)^n\biggl(
\frac{1-\beta \kappa}{\beta-\kappa}\biggr)^m \rho_{0,0},
\label{rho-Q1}
\end{equation}
where $\rho_{0,0}$ is some constant. Then it follows that
\begin{equation}
v_{n,m}=\frac{\frac{1-\kappa^2}{\kappa} U_{n,m} \rho_{n,m}}
{1+\rho_{n,m}},
\end{equation}
and finally we obtain the 1-soliton for Q$1^{\delta}$:
\begin{eqnarray}
u_{n,m}^{1SS}&=&\b{u}_{0}+v_{n,m}\\
&=& \frac{A \alpha^n \beta^m(\kappa+\kappa^{-1}\rho_{n,m})
+B \alpha^{-n}\beta^{-m}(\kappa^{-1}+ \kappa\rho_{n,m})}{1+\rho_{n,m}}\\
&=& \frac{A' \alpha^n \beta^m(1+\kappa^{-2}\rho_{n,m})
+B' \alpha^{-n}\beta^{-m}(1+ \kappa^2\rho_{n,m})}{1+\rho_{n,m}},
\label{1ss-Q1}
\end{eqnarray}
where $A'B'=AB=\delta^2r^2/16$.

\subsection{Bilinearization}
In order to get $\rho$ of \eqref{rho-Q1} into a nicer form we use the
M\"obius transformations
\begin{equation}
\alpha =\frac{a-c}{a+c}, \quad
\beta =\frac{b-c}{b+c}, \quad
\kappa=\frac{k-c}{k+c},\quad\Rightarrow
p=\frac{rc^2}{a^2-c^2},\quad q=\frac{rc^2}{b^2-c^2},
\label{Mob-trans-Q1}
\end{equation}
which leads to the canonical form
\begin{equation}
\rho_{n,m}=\biggl(\frac{a+k}{a-k}\biggr)^n
\biggl(\frac{b+k}{b-k}\biggr)^m \rho_{0,0}.
\label{rho-Q1M}
\end{equation}
With the above M\"obius transformations Q1$^\delta$ \eqref{Q1d} can be
written as
\begin{equation}
{\rm Q}1^{\delta} \equiv (b^2-c^2)(u-\h{u})(\t{u}-\th{u})
-(a^2-c^2)(u-\t{u})(\h{u}-\th{u})
- \frac{r^2\delta^2 c^4 (a^2-b^2)}{(a^2-c^2)(b^2-c^2)}=0,
\label{Q1dm}
\end{equation}
Using \eqref{rho-Q1M} and rearranging the parameters $A$ and $B$ we
write the 1SS \eqref{1ss-Q1} as
\begin{equation}
u_{n,m}^{1SS} = A \alpha^n \beta^m \frac{\psi(n,m,l+2)}{\psi(n,m,l)}
+B \alpha^{-n}\beta^{-m}\frac{\psi(n,m,l-2)}{\psi(n,m,l)},
\quad AB=\delta^2r^2/16,
\label{1ss-Q1-M}
\end{equation}
where $\psi$ is as in \eqref{psi-gen}.

This structure motivates us to bilinearize Q1$^\delta$ through the
following transformation,
\begin{equation}
  u_{n,m}^{NSS}  = A \alpha^n \beta^m\frac{\bb f}{f}+
  B \alpha^{-n}\beta^{-m}\,\frac{\dbb f}{f},\quad AB=\delta^2r^2/16,
\label{trans-Q1}
\end{equation}
where $f=|\h{N-1}|_{_{[\nu]}}~(\nu=1,2,~\mathrm{or}~3)$ with entries
\eqref{psi-gen} and the bar-shift is the shift in the third index $l$,
as discussed in Sec. \ref{sec:cas}.

The solution \eqref{trans-Q1} with $r=1$ is the same as
(I-5.11) with $B=0$.  In fact, by a comparison \eqref{Mob-trans-Q1}
with (I-3.17) one has
\[
\rho(a)=\alpha^{-n}\beta^{-m},~~ \rho(-a)=\alpha^{n}\beta^{m};
\]
and substituting $\rho_i$ in (I-2.2) by \eqref{rho-I} we find from (I-3.17) that
\[
(1+2a S(-a,-a))\times ({\textstyle \prod (k_j-a)^2})=\frac{\bb f}{f},
\quad (1-2aS(a,a))\,/\,({\textstyle \prod (k_j-a)^2})=\frac{\dbb f}{f}.
\]

Surprisingly Q1$^\delta$ can also be bilinearized with the same
bilinear equations as H3$^\delta$, namely \eqref{eq:bil-HM}, and we
have already shown in Proposition \ref{P:H3-Nss-1} that
$f=|\h{N-1}|_{_{[\nu]}}$ solves the $\mathcal B_i$ in \eqref{eq:bil-HM}.

The representation of Q1$^\delta$ \eqref{Q1dm} in terms of $\mathcal
B_i$ is as follows:
\begin{equation*}
\mathrm{Q}1^\delta \equiv \frac{\alpha^{4n+2}\beta^{4m+2}
(a+c)^2(b+c)^2 A^2 \b P_1
-\alpha^{2n}\beta^{2m} \delta^2/4 P_2-(a+c)^2(b+c)^2 B^2\db P_1}
{ \alpha^{2n+1}\beta^{2m+1}(a+c)^2(b+c)^2 f\t f\h f \th f},
\end{equation*}
where
\begin{equation*}
\begin{array}{rl}
  P_1=& Y\t Y-X\h X,~~ X=\mathcal{B}_1-2cf\t f,~~ Y=\mathcal{B}_2-2cf\h f,\\
  P_2=& -(a+c)(a-c)(b+c)^2 \Bigl(\b X\db{\h X}- 4c^2 \b f\,\tb f \db {\h f}\db {\th f} \,\Bigr)\\
      & -(a+c)(a-c)(b-c)^2 \Bigl(\db X \bh X -4c^2 \db f\,\db{\t f}\,\hb f\,\th {\b f}\,\Bigr)\\
      & +4c^2(b+c)(b-c) \Bigl(X \h X -4c^2 f\t f \,\h f\, \th f \, \Bigr)\\
      & +(b+c)(b-c)(a+c)^2 \Bigl(\b Y \db{\t Y}-4c^2 \b f\,\bh f \,\db {\t f}\,\db {\th f}\, \Bigr)\\
      & +(b+c)(b-c)(a-c)^2 \Bigl(\db Y \bt Y - 4c^2 \db f\db{\h f}\,\tb f\,\th {\b f}\, \Bigr)\\
      &  -4c^2(a+c)(a-c) \Bigl(Y \t Y-4c^2 f\h f \,\t f\, \th f\, \Bigr).
\end{array}
\end{equation*}

\section{Conclusions}
In this paper, companion to \cite{NAJ} we have analyzed the soliton
solutions to the models H1, H2, H3, and Q1 in the ABS list
\cite{ABS-CMP-2002} of partial difference equations. Our method is
constructive, progressing in each case from background solution to 1SS
to NSS and bilinearization.

Our approach is fairly algorithmic, and as such we hope that it will
be usable also for other models with multidimensional consistency.
One interesting feature is that in some cases we need several bilinear
equations, some of which seem to have the same continuum limit.
This is a remainder that there are several ways to discretize a
derivative.

\section*{Acknowledgments}
This project is supported by the National Natural Science Foundation
of China (10671121), and Shanghai Leading Academic Discipline
Project (No.J50101).

\begin{appendix}

\section{Casoratians and formulas}
\label{A:a}

\noindent
We define Casoratians
$$f_{_{[\nu]}}=|\h{N-1}|_{_{[\nu]}},\quad  g_{_{[\nu]}}=|\h{N-2},N|_{_{[\nu]}}, \quad
h_{_{[\nu]}}=|\h{N-2},N+1|_{_{[\nu]}},$$
where the matrix entries are given by the function
\begin{eqnarray}
  \psi_i(n,m,l)=
  \varrho_{i}^{+}(c+ k_i)^l(a +k_i)^n(b +k_i )^m+
  \varrho_i^{-}(c- k_i)^l(a -k_i)^n(b -k_i)^m.
  \label{entry-H3-1}
\end{eqnarray}
We introduce notations
\begin{equation}
\Gamma_{\nu,i}=\alpha^2_{\nu} -k_i^2,\quad
\Gamma_{\nu}={\rm Diag}(\Gamma_{\nu,1},\Gamma_{\nu,2},\cdots,\Gamma_{\nu,N}),\quad (\nu=1,2,3),
\end{equation}
where
\begin{equation}
\alpha_1\equiv a,\quad \alpha_2 \equiv b,\quad \alpha_3\equiv c,
\end{equation}
and define the operator $\c E^{\nu}$ as
\begin{equation}
\c E^{\nu}\psi =\Gamma_{\nu}^{-1}E^{\nu} \psi, \quad (\nu=1,2,3).
\label{E-circ}
\end{equation}

The column vector $\psi$ satisfies
\begin{subequations}
\label{rela-I}
\begin{eqnarray}
  (\alpha_\mu-\alpha_\kappa)\psi &\!\!\!=\!\!\!&(E^{\mu}-E^{\kappa})\psi,\label{rela-I-a}\\
  \hskip -25pt(\alpha_\mu\!\!+\alpha_\nu)\c E^{\mu}\!E_{\nu}\psi &\!\!\!=\!\!\!&(\c E^{\mu}+E_{\nu})\psi,
  \label{rela-I-b}\\
  E_{\mu}\psi &\!\!\!=\!\!\!&[E_{\kappa}-(\alpha_\mu-\alpha_\kappa)(E_{\kappa})^{2}
                 +(\alpha_\mu-\alpha_\kappa)^2 E_{\mu}(E_{\kappa})^{2}]\psi,\label{rela-I-c}\\
  E_{\mu}\psi &\!\!\!=\!\!\!\!&[E_{\kappa}-(\alpha_\mu\!-\alpha_\kappa)(E_{\kappa})^{2}
                 +(\alpha_\mu-\alpha_\kappa)^2 (E_{\kappa})^{3}\!
                 -(\alpha_\mu-\alpha_\kappa)^3 E_{\mu}(E_{\kappa})^{3}]\psi.\label{rela-I-d}
\end{eqnarray}
\end{subequations}
For the Casoratian $f_{_{[\kappa]}}$, we have
\begin{subequations}
\label{Formula-I}
\begin{eqnarray}
  -(\alpha_\mu-\alpha_\kappa)^{N-2} E_{\mu} f_{_{[\kappa]}}
  \!\!&=& \!\!|\h{N-2},E_{\mu} \psi(N-2)|_{_{[\kappa]}},\label{I-a}\\
  (\alpha_\mu-\alpha_\kappa)^{N-1} E_{\mu} f_{_{[\kappa]}}
  \!\!&=&\!\! |\h{N-2},E_{\mu} \psi(N-1)|_{_{[\kappa]}},\label{I-b}\\
  (\alpha_\mu+\alpha_\kappa)^{N-2} E^{\mu} f_{_{[\kappa]}}
  \!\!&=& \!\!|\Gamma_{\mu}||\h{N-2},\c E^{\mu} \psi(N-2)|_{_{[\kappa]}},~(\mu\neq \kappa),\label{I-c}\\
  (\alpha_\mu+\alpha_\kappa)^{N-1} E^{\mu} f_{_{[\kappa]}}
  \!\!&=&\!\! |\Gamma_{\mu}||\h{N-2},\c E^{\mu}\psi(N-1)|_{_{[\kappa]}},~(\mu\neq \kappa);\label{I-d}
\end{eqnarray}
\begin{eqnarray}
  E^{\kappa} f_{_{[\kappa]}} \!\!&=& \!\!|\t{N}|_{_{[\kappa]}},\label{I-e}\\
  E_{\kappa} f_{_{[\kappa]}} \!\!&=& \!\!|-1,\h{N-2}|_{_{[\kappa]}},\label{I-f}\\
  -(\alpha_\mu-\alpha_\kappa)^{N-2} E_{\mu}E^{\kappa} f_{_{[\kappa]}}
  \!\!&=& \!\!|\t{N-1},E_{\mu} \psi(N-1)|_{_{[\kappa]}},\label{I-g}\\
  (\alpha_\mu-\alpha_\kappa)^{N-1} E_{\mu}E_{\kappa} f_{_{[\kappa]}}
  \!\!&=& \!\!|-1,\h{N-3},E_{\mu} \psi(N-2)|_{_{[\kappa]}},\label{I-h}\\
  (\alpha_\mu+\alpha_\kappa)^{N-2} E^{\mu}E^{\kappa} f_{_{[\kappa]}}
  \!\!&=& \!\!|\Gamma_{\mu}||\t{N-1},\c E^{\mu}  \psi(N-1)|_{_{[\kappa]}},~(\mu\neq \kappa),\label{I-i}\\
  (\alpha_\mu+\alpha_\kappa)^{N-1} E^{\mu}E_{\kappa} f_{_{[\kappa]}}
  \!\!&=& \!\!|\Gamma_{\mu}||-1,\h{N-3},\c E^{\mu}  \psi(N-2)|_{_{[\kappa]}},~(\mu\neq \kappa);\label{I-j}
\end{eqnarray}
\begin{align}
  (\alpha_\mu-\alpha_\nu)(\alpha_\mu-\alpha_\kappa)^{N-2}
  (\alpha_\nu-\alpha_\kappa)^{N-2}&
  E_{\mu}E_{\nu}f_{_{[\kappa]}}\nonumber\\
  =& |\h{N-3},E_{\nu} \psi(N-2),E_{\mu} \psi(N-2)|_{_{[\kappa]}},\label{I-k}\\
  (\alpha_\mu+\alpha_\nu)(\alpha_\mu+\alpha_\kappa)^{N-2}
  (\alpha_\nu-\alpha_\kappa)^{N-2}&
  E^{\mu}E_{\nu}f_{_{[\kappa]}}\nonumber\\
  =& |\Gamma_{\mu}||\h{N-3},E_{\nu} \psi(N-2),\c E^{\mu} \psi(N-2)|_{_{[\kappa]}},\quad (\mu\neq \kappa);\label{I-l}
\end{align}
\begin{align}
  (\alpha_\mu-\alpha_\nu)(\alpha_\mu-\alpha_\kappa)^{N-2}
  (\alpha_\nu-\alpha_\kappa)^{N-2}&
  E_{\mu}E_{\nu}E^{\kappa} f_{_{[\kappa]}}\nonumber\\
  =& |\t{N-2},E_{\nu} \psi(N-1),E_{\mu} \psi(N-1)|_{_{[\kappa]}},\label{I-m}\\
  (\alpha_\mu+\alpha_\nu)(\alpha_\mu+\alpha_\kappa)^{N-2}
  (\alpha_\nu-\alpha_\kappa)^{N-2}&
  E^{\mu}E_{\nu}E^{\kappa} f_{_{[\kappa]}}\nonumber\\
  =&  |\Gamma_{\mu}| |\t{N-2},E_{\nu} \psi(N-1),\c E^{\mu}  \psi(N-1)|_{_{[\kappa]}},
  ~~(\mu\neq \kappa).\label{I-n}
\end{align}
\end{subequations}
For $g_{_{[\kappa]}}$ we have
\begin{subequations}
\label{Formula-I-g}
\begin{eqnarray}
  \hskip -50pt
  -(\alpha_\mu\!-\!\alpha_\kappa)^{N-2} E_{\mu} [g_{_{[\kappa]}}\!+(\alpha_\mu\!-\!\alpha_\kappa)f_{_{[\kappa]}}]
  \!\!&=&\!\!|\h{N-3},N-1,E_{\mu} \psi(N-2)|_{_{[\kappa]}},\label{I-g-a}\\
  \hskip -20pt(\alpha_\mu\!+\!\alpha_\kappa)^{N-2} E^{\mu}
  [g_{_{[\kappa]}}\!-(\alpha_\mu\!+\alpha_\kappa)f_{_{[\kappa]}}]
  \!\!&=& \!\!|\Gamma_{\mu}||\h{N-3},N\!-1,\c E^{\mu} \psi(N-2)|_{_{[\kappa]}},~(\mu\neq \kappa).
  \label{I-g-b}
\end{eqnarray}
\end{subequations}
For $h_{_{[\kappa]}}$,
\begin{subequations}
\label{Formula-I-h}
\begin{eqnarray}
  \hskip -40pt
  -(\alpha_\mu-\alpha_\kappa)^{N-2} E_{\mu} [h_{_{[\kappa]}}+(\alpha_\mu-\alpha_\kappa)g_{_{[\kappa]}}]
  \!\!&=&\!\!|\h{N-3},N,E_{\mu} \psi(N-2)|_{_{[\kappa]}},\label{I-h-a}\\
  \hskip -20pt(\alpha_\mu+\alpha_\kappa)^{N-2} E^{\mu} [h_{_{[\kappa]}}-(\alpha_\mu+\alpha_\kappa)g_{_{[\kappa]}}]
  \!\!&=&\!\!|\Gamma_{\mu}||\h{N-3},N,\c E^{\mu}\psi(N-2)|_{_{[\kappa]}},~~(\mu\neq \kappa).
  \label{I-h-b}
\end{eqnarray}
\end{subequations}
These formulas, \eqref{rela-I}-\eqref{Formula-I-h}, are valid for
$\mu,\nu,\kappa=1,2,3$ except when otherwise indicated.

\section{Proof for the formulas in Appendix \ref{A:a}}

The formulas \eqref{rela-I} can directly be verified.  In
\eqref{Formula-I}-\eqref{Formula-I-h}, those only containing down
shifts $E_{\mu}$ and $E_{\nu}$ can be proved by using the relations
\eqref{rela-I}.  Since $a$, $b$ and $c$ appear in $\psi_i$
equivalently, we take $\mu=1$, $\nu=2$ and $\kappa=3$ in the following
proof.

From \eqref{rela-I-a} we have
\begin{eqnarray}
(a-c)  \dt \psi(l)&=&\psi(l)-\dt\psi(l+1),
\label{rela-I-a-1}\\
(b-c)  \dh \psi(l)&=&\psi(l)-\dh\psi(l+1),
\label{rela-I-a-2}\\
(a-b)  \dth \psi(l)&=&\dh \psi(l)-\dt\psi(l).
\label{rela-I-a-3}
\end{eqnarray}
Using \eqref{rela-I-a-1} for $f_{_{[3]}}$  we first find
\[
(a-c) \dt f_{_{[3]}} =|(a-c)\dt \psi(0),\dt \psi(1),\dots,\dt \psi(N-1)|_{_{[3]}}
=| \psi(0),\dt \psi(1),\dots,\dt \psi(N-1)|_{_{[3]}}
\]
and repeating this $N-1$ times we get
\[
  (a-c)^{N-1} \dt f_{_{[3]}}=| \psi(0),\dots,\psi(N-2),\dt \psi(N-1)|_{_{[3]}}
=|\h{N-2},\dt \psi(N-1)|_{_{[3]}}.
\]
This is formula \eqref{I-b}. Again using \eqref{rela-I-a-1} and
expressing $\dt \psi(N-1)$ through $\psi(N-2)$ and $\dt\psi(N-2)$ in
the above formula yields
\begin{equation}
  -(a-c)^{N-2} \dt f_{_{[3]}}=|\h{N-2},\dt \psi(N-2)|_{_{[3]}},
\end{equation}
which is \eqref{I-a}.  For $\dth f_{_{[3]}}$ we first using
\eqref{rela-I-a-2} from the above formula to obtain
\[
-(b-c)^{N-2}(a-c)^{N-2} \dth f_{_{[3]}}=|\h{N-3},\dh \psi(N-2),\dth
\psi(N-2)|_{_{[3]}}
\]
then from \eqref{rela-I-a-3} we get
\begin{equation}
  (a-b)(a-c)^{N-2}(b-c)^{N-2} \dth f
=|\h{N-3},\dh \psi(N-2),\dt \psi(N-2)|_{_{[3]}},
\end{equation}
which is \eqref{I-k}.

For $g_{_{[3]}}=|\h{N-2},N|_{_{[3]}}$  we find first
\[
  (a-c)^{N-2} \dt g_{_{[3]}}=|\h{N-3},\dt \psi(N-2),\dt \psi(N)|_{_{[3]}}.
\]
Using \eqref{rela-I-c} to re-write the last column we get
\begin{equation}
  (a-c)^{N-2} \dt g_{_{[3]}}=-|\h{N-3},N-1,\dt \psi(N-2)|_{_{[3]}}
  +(a-c) |\h{N-2},\dt \psi(N-2)|_{_{[3]}},
\end{equation}
which can also be stated as
\begin{equation}
  -(a-c)^{N-2} [\dt g_{_{[3]}}+(a-c)\dt f_{_{[3]}}]=|\h{N-3},N-1,\dt \psi(N-2)|_{_{[3]}},
\end{equation}
i.e., \eqref{I-g-a}. Formula \eqref{I-h-a} can be derived similarly by
using \eqref{rela-I-d}.

To show how to derive those formulas with up shifts in
\eqref{Formula-I}-\eqref{Formula-I-h}, we introduce two auxiliary
functions:
\begin{subequations}
\begin{eqnarray}
  \hskip -20pt \mathbf{II}:& \phi_i(n,m,l)=
  \varrho_{i}^{+}(c+ k_i)^l(a +k_i)^n(b -k_i )^{-m}+
  \varrho_i^{-}(c- k_i)^l(a -k_i)^n(b +k_i)^{-m},
\label{entry-H3-2}\\
  \hskip -40 pt \mathbf{III}:& \omega_i(n,m,l)=
  \varrho_{i}^{+}(c+ k_i)^l(a -k_i)^{-n}(b +k_i )^m+
  \varrho_i^{-}(c- k_i)^l(a +k_i)^{-n}(b -k_i)^m.
\label{entry-H3-3}
\end{eqnarray}
\end{subequations}
Casoratians $f$, $g$ and $h$ (w.r.t.~bar-shift) with column vectors
$\phi=(\phi_1,\cdots,\phi_N)^T$ and
$\omega=(\omega_1,\cdots,\omega_N)^T$ are denoted by $f_{_{{\rm
      II}[3]}}$, $g_{_{{\rm II}[3]}}$, $h_{_{{\rm II}[3]}}$ and
$f_{_{{\rm III}[3]}}$, $g_{_{{\rm III}[3]}}$, $h_{_{{\rm III}[3]}}$,
respectively.  They are related to $f_{_{[3]}}$, $g_{_{[3]}}$ and
$h_{_{[3]}}$ through
\begin{subequations}
\begin{eqnarray}
&& f_{_{[3]}}= |\Gamma_2|^{m} f_{_{{\rm II}[3]}}=|\Gamma_1|^{n} f_{_{{\rm III}[3]}},\\
&& g_{_{[3]}}= |\Gamma_2|^{m} g_{_{{\rm II}[3]}}=|\Gamma_1|^{n} g_{_{{\rm III}[3]}},\\
&& h_{_{[3]}}= |\Gamma_2|^{m} h_{_{{\rm II}[3]}}=|\Gamma_1|^{n} h_{_{{\rm III}[3]}},
\end{eqnarray}
\end{subequations}
which follow from $\psi =\Gamma^m_{2}\phi = \Gamma^n_{1} \omega$.

$\phi$ satisfies
\begin{eqnarray}
  (a-c)  \dt \phi(l)&=&\phi(l)-\dt\phi(l+1),\\
  (b+c)  \h \phi(l)&=&\phi(l)+\h\phi(l+1),\\
  (a+b)\,\dt{\h \phi}(l)&=&\h \phi(l)+\dt\phi(l).
\end{eqnarray}
Similar to the previous case of $\psi$, using these relations we can get
\begin{subequations}
\label{II-bar}
\begin{eqnarray}
  (b+c)^{N-2} \h f_{_{{\rm II}[3]}} &=& |\h{N-2},\h \phi(N-2)|_{_{{\rm II}[3]}},\\
  (b+c)^{N-1} \h f_{_{{\rm II}[3]}} &=& |\h{N-2},\h \phi(N-1)|_{_{{\rm II}[3]}},\\
  -(a+b)(a-c)^{N-2}(b+c)^{N-2} \dt{\h f}_{_{{\rm II}[3]}} &=& |\h{N-3},\h \phi(N-2),\dt \phi(N-2)|_{_{{\rm II}[3]}};
\end{eqnarray}
\begin{eqnarray}
  (b+c)^{N-2} \bh f_{_{{\rm II}[3]}} &=& |\t{N-1},\h \phi(N-1)|_{_{{\rm II}[3]}},\\
  -(a+b) (a-c)^{N-2}(b+c)^{N-2} \dt{\bh f}_{_{{\rm II}[3]}}&=& |\t{N-2},\h \phi(N-1),\dt \phi(N-1)|_{_{{\rm II}[3]}},\\
  (b+c)^{N-1} \db{\h f}_{_{{\rm II}[3]}} &=& |-1,\h{N-3},\h \phi(N-2)|_{_{{\rm II}[3]}};
\end{eqnarray}
\begin{eqnarray}
  (b+c)^{N-2} [\h g_{_{{\rm II}[3]}} -(b+c) \h f_{_{{\rm II}[3]}}] &=& |\h{N-3},N-1,\h \phi(N-2)|_{_{{\rm II}[3]}},\\
  (b+c)^{N-2} [\h h_{_{{\rm II}[3]}}-(b+c) \h g_{_{{\rm II}[3]}}] &=& |\h{N-3},N,\h \phi(N-2)|_{_{{\rm II}[3]}}.
\end{eqnarray}
\end{subequations}
Now, noting that $f_{_{[3]}}= |\Gamma_2|^{m} f_{_{{\rm II}[3]}}$ and $\psi =\Gamma^m_{2}\phi$,
we have
\begin{equation}
(b+c)^{N-2} \h f_{_{[3]}} =(b+c)^{N-2}|\Gamma_2|^{m+1} \h f_{_{{\rm II}[3]}}
= |\Gamma_2||\psi(0),\cdots,\psi(N-2), \c E^{2} \psi(N-2)|_{_{[3]}},
\label{f-hat}
\end{equation}
where the operator $\c E^{2}$ is defined as \eqref{E-circ}, The above
formula is just \eqref{I-c} for $\mu=2$. From the rest of
\eqref{II-bar} we can get other formulas with an up-hat shift in
\eqref{Formula-I}-\eqref{Formula-I-h}, where for \eqref{I-l} and
\eqref{I-n} ($\mu=2,\nu=1$) we need to use \eqref{rela-I-b}.  Besides,
if we take a down-hat shift for \eqref{f-hat} and rewrite the last
column by using \eqref{rela-I-b} (with $\mu=\nu=2$), then we get
$$
2b (b+c)^{N-2}(b-c)^{N-2}f_{_{[3]}}
 = |\Gamma_{2}| |\h{N-3},\dh \psi(N-2),\c E^{2}  \psi(N-2)|_{_{[\kappa]}},
$$
i.e., the formula \eqref{I-l} of the case $\mu=\nu=2$.

Next, using $\omega$ we can derive these formulas with an up-tilde
shift in \eqref{Formula-I}-\eqref{Formula-I-h}.  $\omega$ satisfies
\begin{eqnarray}
  (a+c)  \t \omega(l)&=&\omega(l)+\t\omega(l+1),\\
  (b-c)  \dh \omega(l)&=&\omega(l)-\dh\omega(l+1),\\
  (a+b)\,\dh{\t \omega}(l)&=&\dh \omega(l)+\t\omega(l),
\end{eqnarray}
and from these relations we have
\begin{subequations}
\begin{eqnarray}
  (a+c)^{N-2} \t f_{_{{\rm III}[3]}} &=& |\h{N-2},\t \omega(N-2)|_{_{{\rm III}[3]}},\\
  (a+c)^{N-1} \t f_{_{{\rm III}[3]}} &=& |\h{N-2},\t \omega(N-1)|_{_{{\rm III}[3]}},\\
  (a+b)(a+c)^{N-2}(b-c)^{N-2} \dh{\t f}_{_{{\rm III}[3]}}
  &=& |\h{N-3},\dh \omega(N-2),\t \omega(N-2)|_{_{{\rm III}[3]}};
\end{eqnarray}
\begin{eqnarray}
  (a+c)^{N-2} \bt f_{_{{\rm III}[3]}} &=& |\t{N-1},\t \omega(N-1)|_{_{{\rm III}[3]}},\\
  (a+b) (b-c)^{N-2}(a+c)^{N-2} \dh{\bt f}_{_{{\rm III}[3]}}
  &=& |\t{N-2},\dh \omega(N-1),\t \omega(N-1)|_{_{{\rm III}[3]}},\\
  (a+c)^{N-1} \db{\t f}_{_{{\rm III}[3]}} &=& |-1,\h{N-3},\t \omega(N-2)|_{_{{\rm III}[3]}};
\end{eqnarray}
\begin{eqnarray}
  (a+c)^{N-2} [\t g_{_{{\rm III}[3]}} -(a+c) \t f_{_{{\rm III}[3]}}]
  &=& |\h{N-3},N-1,\t \omega(N-2)|_{_{{\rm III}[3]}},\\
  (a+c)^{N-2} [\t h_{_{{\rm III}[3]}}-(a+c) \t g_{_{{\rm III}[3]}}]
  &=& |\h{N-3},N,\t \omega(N-2)|_{_{{\rm III}[3]}}.
\end{eqnarray}
\end{subequations}
By using operator $\c E^{1}$, these formulas will generate those of \eqref{Formula-I}-\eqref{Formula-I-h}
with an up-tilde shift.

Thus we can get all formulas for $f$, $g$ and $h$ given by \eqref{Formula-I}-\eqref{Formula-I-h}.

\section{Formulas for $f=|\h{N-1}|_{_{[3]}},\quad  g=|-1,\t{N-1}|_{_{[3]}}$}
\label{A:c}

For the Casoratian $f_{_{[\kappa]}}$, we have
\begin{subequations}
\label{Formula-II}
\begin{eqnarray}
  E_{\mu} f
  \!\!&=& \!\! E_3 f- (\alpha_\mu-\alpha_3)|E_{\mu}\psi(-1), \h{N-2} |_{_{[3]}},\label{II-a}\\
  E_{\mu}E^3 f
  \!\!&=& \!\! f- (\alpha_\mu-\alpha_3)g +(\alpha_\mu-\alpha_3)^2|E_{\mu}\psi(-1), \t{N-1} |_{_{[3]}},\label{II-b}\\
  E_{\mu} g
  \!\!&=& \!\! |E_{\mu}\psi(-1), \h{N-2} |_{_{[3]}}- (\alpha_\mu-\alpha_3)|E_{\mu}\psi(-1),-1, \t{N-2} |_{_{[3]}},
  \label{II-c}\\
  \frac{(-1)^N}{|\Gamma_\mu|}E^{\mu} f
  \!\!&=& \!\! E_3 f- (\alpha_\mu+\alpha_3)|\c E^{\mu}\psi(-1), \h{N-2}|_{_{[3]}},\label{II-d}\\
  \frac{(-1)^N}{|\Gamma_\mu|}E^{\mu}E^3 f
  \!\!&=& \!\! f+(\alpha_\mu+\alpha_3)g -(\alpha_\mu+\alpha_3)^2|\c E^{\mu}\psi(-1), \t{N-1}|_{_{[3]}},\label{II-e}\\
  \frac{(-1)^N}{|\Gamma_\mu|}E^{\mu} g
  \!\!&=& \!\! -|\c E^{\mu}\psi(-1), \h{N-2} |_{_{[3]}}- (\alpha_\mu+\alpha_3)|\c E^{\mu}\psi(-1),-1, \t{N-2}|_{_{[3]}};
  \label{II-f}
\end{eqnarray}
\begin{align}
   &(a-b)\dth {\b f}\nonumber\\
  =&(a-b)\db f
  -(a-c)^2|\dt\psi(-1), \h{N-2} |_{_{[3]}}+(b-c)^2|\dh\psi(-1), \h{N-2} |_{_{[3]}} \nonumber \\
  & +(a-c)^2(b-c)|\dt\psi(-1), -1, \t{N-2}|_{_{[3]}}-(a-c)(b-c)^2|\dh\psi(-1),-1, \t{N-2}|_{_{[3]}}\nonumber\\
  & +(a-c)^2(b-c)^2|\dh\psi(-1), \dt\psi(-1), \t{N-2}|_{_{[3]}},\label{II-g}\\
  &\frac{a-b}{|\Gamma_1||\Gamma_2|}\th {\b f}\nonumber\\
  =&(a-b)\db f
  -(a+c)^2|\c E^{1} \psi(-1), \h{N-2} |_{_{[3]}}+(b+c)^2|\c E^{2}\psi(-1), \h{N-2} |_{_{[3]}} \nonumber \\
  & -(a+c)^2(b+c)|\c E^{1}\psi(-1), -1, \t{N-2}|_{_{[3]}}+(a+c)(b+c)^2|\c E^{2}\psi(-1),-1, \t{N-2}|_{_{[3]}}\nonumber\\
  & -(a+c)^2(b+c)^2|\c E^{2}\psi(-1), \c E^{1}\psi(-1), \t{N-2}|_{_{[3]}}.\label{II-h}
\end{align}
\end{subequations}

\section{Proof for the formulas in Appendix \ref{A:c}}

Noting that
\begin{equation}
\dt \psi(l+1)=\psi(l)-(a-c)\dt\psi(l),
\label{ff-1}
\end{equation}
we have
\begin{align*}
\dt f =&|\dt\psi(0),\dt\psi(1),\cdots,\dt\psi(N-3),\dt\psi(N-2),\dt\psi(N-1)|_{_{[3]}}\\
      =&|\dt\psi(0),\dt\psi(1),\cdots,\dt\psi(N-3),\dt\psi(N-2),\psi(N-2)-(a-c)\dt\psi(N-2)|_{_{[3]}}\\
      =&|\dt\psi(0),\dt\psi(1),\cdots,\dt\psi(N-3),\dt\psi(N-2),\psi(N-2)|_{_{[3]}}\\
      =&|\dt\psi(0),\dt\psi(1),\cdots,\dt\psi(N-3),\psi(N-3),\psi(N-2)|_{_{[3]}}\\
      =& \cdots\cdots\\
      =&|\dt\psi(0),\psi(0),\cdots,\psi(N-4),\psi(N-3),\psi(N-2)|_{_{[3]}}\\
      =&|\psi(-1)-(a-c)\dt\psi(-1),\psi(0),\cdots,\psi(N-4),\psi(N-3),\psi(N-2)|_{_{[3]}}\\
      =&\db f -(a-c)|\dt\psi(-1),\h{N-2}|_{_{[3]}}.
\end{align*}
This is \eqref{II-a} for $\mu=1$. In a similar way we can prove \eqref{II-b},
\eqref{II-c} and \eqref{II-g}.

To prove \eqref{II-d}, \eqref{II-e},
\eqref{II-f} and \eqref{II-h}, we consider the Casoratian type determinants
$f=|\h{N-1}|_{_{[3]}},\quad  g=|-1,\t{N-1}|_{_{[3]}}$ with the following entries
\begin{equation}
\mathbf{IV}: \sigma_i(n,m,l)=
  \varrho_{i}^{+}(c+ k_i)^l(a-k_i)^{-n}(b -k_i )^{-m}+
  \varrho_i^{-}(c- k_i)^l(a +k_i)^{-n}(b +k_i)^{-m},
\label{entry-H3-4}
\end{equation}
and denote  such $f$ and $g$ by $f_{_{{\rm IV}[3]}}$ and $g_{_{{\rm IV}[3]}}$.
Noting that
$\sigma$ satisfies
\begin{eqnarray}
  (a+c)  \t \sigma(l)&=&\sigma(l)+\t\sigma(l+1),\\
  (b+c)  \h \sigma(l)&=&\sigma(l)+\h\sigma(l+1),\\
  (a-b)\,\th \sigma(l)&=&\h \sigma(l)-\t\sigma(l),
\end{eqnarray}
we can have
\begin{subequations}
\label{IV-bar}
\begin{eqnarray}
  (-1)^N E^{\mu} f_{_{{\rm IV}[3]}}
  \!\!&=& \!\! E_3 f_{_{{\rm IV}[3]}}- (\alpha_\mu+\alpha_3)|E^{\mu}\sigma(-1), \h{N-2} |_{_{{\rm IV}[3]}},\\
  (-1)^N E^{\mu}E^3 f_{_{{\rm IV}[3]}}
  \!\!&=& \!\! f_{_{{\rm IV}[3]}}+ (\alpha_\mu+\alpha_3)g_{_{{\rm IV}[3]}}
  -(\alpha_\mu+\alpha_3)^2|E^{\mu}\sigma(-1), \t{N-1} |_{_{{\rm IV}[3]}},~~\\
  (-1)^N E^{\mu} g_{_{{\rm IV}[3]}}
  \!\!&=& \!\! -|E^{\mu}\sigma(-1), \h{N-2} |_{_{{\rm IV}[3]}}\nonumber\\
  && - (\alpha_\mu+\alpha_3)|E^{\mu}\sigma(-1),-1, \t{N-2} |_{_{{\rm IV}[3]}};
\end{eqnarray}
\begin{align}
   &(a-b)\th {\b f}_{_{{\rm IV}[3]}}\nonumber\\
  =&(a-b)\db f_{_{{\rm IV}[3]}}
  -(a+c)^2|\t\sigma(-1), \h{N-2} |_{_{{\rm IV}[3]}}+(b+c)^2|\h\sigma(-1), \h{N-2} |_{_{{\rm IV}[3]}} \nonumber \\
  & -(a+c)^2(b+c)|\t\sigma(-1), -1, \t{N-2}|_{_{{\rm IV}[3]}}
  +(a+c)(b+c)^2|\h\sigma(-1),-1, \t{N-2}|_{_{{\rm IV}[3]}}\nonumber\\
  & -(a+c)^2(b+c)^2|\h\sigma(-1), \t\sigma(-1), \t{N-2}|_{_{{\rm IV}[3]}}.
\end{align}
\end{subequations}
Then, using the relationship  $f_{_{{\rm I}[3]}}=|\Gamma_1|^{n} |\Gamma_2|^{m} f_{_{{\rm IV}[3]}}$,
$g_{_{{\rm I}[3]}}=|\Gamma_1|^{n} |\Gamma_2|^{m} g_{_{{\rm IV}[3]}}$ and $\psi =\Gamma_1^{n}\Gamma^m_{2}\sigma$,
we can finally derive the formulas \eqref{II-d}, \eqref{II-e}, \eqref{II-f} and \eqref{II-h}.

\end{appendix}
\end{document}